\newcommand{\local}{\ensuremath{\textsf{LOCAL}}\xspace}
\newcommand{\olocal}{\ensuremath{\textsf{O-LOCAL}}\xspace}
\newcommand{\sleeping}{\ensuremath{\textsf{Sleeping}}\xspace}
\newcommand{\linput}{\ensuremath{\textsf{input}}\xspace}
\newcommand{\loutput}{\ensuremath{\textsf{output}}\xspace}
\newcommand{\bfs}{\ensuremath{\textsf{BFS}}\xspace}
\newtheorem{theorem}{Theorem}
\newtheorem{lemma}[theorem]{Lemma}
\newtheorem{definition}[theorem]{Definition}
\newtheorem{claim}[theorem]{Claim}
\newtheorem{oq}{Open Question}
\title{Solving Sequential Greedy Problems Distributedly \\ with Sub-Logarithmic Energy Cost}
\author{
Alkida Balliu\thanks{Additional support from MUR (Italy) Department of Excellence 2023 - 2027, and the PNRR MIUR research project GAMING ``Graph Algorithms and MinINg for Green agents'' (PE0000013, CUP D13C24000430001).} \\
\small{Gran Sasso Science Institute}\\
\small{L'Aquila, Italy}
\and
Pierre Fraigniaud\thanks{Additional support from ANR projects DUCAT (ANR-20-CE48-0006) and ENEDISC (ANR-24-CE48-7768-01). }\\
\small{Institut de Recherche en Informatique Fondamentale}\\
\small{CNRS and Université Paris Cité}\\
\small{Paris, France}
\and
Dennis Olivetti\footnotemark[1]\\
\small{Gran Sasso Science Institute}\\
\small{L'Aquila, Italy}
\and
Mikaël Rabie\footnotemark[2]\\
\small{Institut de Recherche en Informatique Fondamentale}\\
\small{CNRS and Université Paris Cité}\\
\small{Paris, France}}
\date{}
\begin{document}
\maketitle

\begin{abstract}
    We study the awake complexity of graph problems that belong to the class \olocal, which includes a subset of problems solvable by sequential greedy algorithms, such as $(\Delta+1)$-coloring and maximal independent set. It is known from previous work that, in $n$-node graphs of maximum degree~$\Delta$, any problem in the class \olocal can be solved by a deterministic distributed algorithm with awake complexity $O(\log\Delta+\log^\star n)$.

    In this paper, we show that any problem belonging to the class \olocal can be solved by a deterministic distributed algorithm with awake complexity $O(\sqrt{\log n}\cdot\log^\star n)$. This leads to a polynomial improvement over the state of the art when $\Delta\gg 2^{\sqrt{\log n}}$, e.g., $\Delta=n^\epsilon$ for some arbitrarily small $\epsilon>0$. The key ingredient for achieving our results is the computation of a network decomposition that uses a small-enough number of colors in sub-logarithmic time in the \sleeping model, which can be of independent interest.

    \medskip

    \noindent \textbf{Keywords:} distributed graph algorithms,  energy-efficient  algorithms, Sleeping model. 
\end{abstract}

\section{Introduction}

In the last four decades, the study of the distributed complexity of graph problems has mainly been focused on understanding the worst-case complexity. Recently, however, different notions of complexities, that better capture the energy usage of a distributed system, have attracted lots of attention. One successful such notion of complexity is captured by the \sleeping variant of the \local model.

The \sleeping model has been introduced in~\cite{chatterjee2020sleeping}. It assumes a set of $n$ fault-free processes connected by an $n$-node graph $G=(V,E)$, in which computation proceeds in lockstep, as a sequence of \emph{synchronous rounds}. At each round, each process, i.e., each node of $G$, is either \emph{awake} or \emph{asleep}. A node $v\in V$ that is awake at a round $r$ can send a message (of arbitrary size) to each of its neighbors, receive the messages sent by its (awake) neighbors, and perform some individual computation (the local computation power of each node is not bounded). Every node $v$ has also the ability to become asleep for a prescribed number of rounds, whose value is chosen by~$v$ as a function of its internal state. If node $v$ remains awake at the end of round~$r\geq 1$, then it directly proceeds to round~$r+1$. Otherwise, it becomes asleep for, say $t\geq 1$ rounds (where $t$ is chosen by~$v$). A sleeping node cannot send messages, and all messages sent to a sleeping node are lost. After $t$ rounds, node~$v$ wakes up, at round $r+t+1$. Initially, i.e., at round~1, all nodes are awake. The main motivation for the study of the \sleeping model is to determine the power and limitation of distributed systems in which processing nodes have the ability to save energy by turning themselves off for a prescribed amount of time. 

In addition to the classical \emph{round complexity} measuring the number of rounds required to solve a given problem, the main measure of complexity in the \sleeping model is the \emph{awake complexity}, which is the maximum, taken over all the nodes $v\in V$, of the number of rounds during which $v$ is awake. 

This paper focuses on the awake complexity of solving a large class of natural problems, namely those in the class \olocal, introduced in \cite{barenboimM21}. Roughly, this class includes all problems which can be solved by a restricted form of sequential greedy algorithms. More specifically, the order in which the nodes are sequentially picked is governed by an arbitrary given acyclic orientation of the edges of the graph~$G$, that is, a node $v$ can be picked only if solutions have been computed for all its descendants according to the given orientation. More importantly, the solution at $v$ can be computed based only on the solutions previously computed for these descendants. Even if \olocal does not contain all sequentially greedily solvable graph problems (e.g., distance-2 coloring is not in \olocal), it includes important problems in distributed network computing, such as $(\Delta+1)$-vertex coloring and maximal independent set (MIS), where $\Delta$ denotes the maximum degree in the graph. 

It is known~\cite{barenboimM21} that every problem $\Pi\in \olocal$ can be solved by a (deterministic) distributed algorithm with awake complexity $O(\log\Delta+\log^\star n)$ in $n$-node graphs with maximum degree~$\Delta$. Having in mind that the function $\log^\star$ grows extremely slowly, this result says that a large class of problems can be solved by having each node awake a number of rounds that is  ``quasi-constant'' in graphs of bounded degree (i.e., for $\Delta=O(1)$). However, for large $\Delta$, say $\Delta=n^\epsilon$ for some $\epsilon>0$, the awake complexity of the distributed algorithm in~\cite{barenboimM21} for \olocal problems grows essentially as $O(\log n)$. 

\subsection{Our Results}

We exhibit a quadratic improvement for the awake complexity of \olocal problems, when expressed as a function of~$n$. Specifically, we show the following result. 

\begin{theorem}\label{thm:olocal-result}
     Any graph problem $\Pi\in\olocal$ can be solved deterministically with awake complexity $O(\sqrt{\log n} \cdot \log^* n)$ in the \sleeping \local model.
\end{theorem}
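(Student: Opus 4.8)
The plan is to factor the argument into (i) a structural tool — a network decomposition with \emph{few colors} that can be computed cheaply in the \sleeping model — and (ii) a generic reduction turning any such decomposition into an algorithm for an arbitrary \olocal problem.

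\textbf{Step 1: from a decomposition to \olocal.} Suppose we are handed a partition of $V$ into clusters of diameter at most $d$, colored with $\kappa$ colors so that same-colored clusters are pairwise non-adjacent, together with, for every cluster, a rooted spanning tree in which each node knows its depth. I would use the total order ``$(\text{cluster color},\ \text{within-cluster topological order})$'' as the acyclic orientation along which the greedy rule of $\Pi$ is run; this is legitimate since for the problems in \olocal the underlying orientation may be chosen freely, and every edge of this orientation goes from a smaller to a larger color or stays inside a cluster, so cluster out-degrees stay $\le\Delta$. Process the color classes in the order $\kappa,\kappa-1,\dots,1$. Just before color $i$ is processed there is a single \emph{propagation round} in which every already-fixed node (i.e.\ every node of color $>i$) sends its output to all neighbors and every unfixed node stores what it hears — so after this one round every color-$i$ node knows the outputs of all of its higher-color neighbors. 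Color $i$ is then handled in parallel over its clusters: inside each cluster, convergecast along the precomputed tree the cluster topology and the stored boundary outputs to the root (messages are unbounded in \local), let the root simulate the greedy rule on the whole cluster in the internal topological order — which is well defined because a node's rule depends only on its out-neighbors' outputs, which are either internal and handled in order, or external and already fixed — and broadcast the results back down. Thanks to the depth labels, convergecast and broadcast each keep a node awake for only $O(1)$ rounds (a node at depth $\ell$ is awake at the single round determined by $\ell$), even though traversing a diameter-$d$ cluster costs $\Theta(d)$ rounds. So this step adds only $O(1)$ awake rounds per node, and the whole cost is that of building the decomposition.

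\textbf{Step 2: building the decomposition.} It then remains to compute, in the \sleeping model with awake complexity $O(\sqrt{\log n}\cdot\log^\star n)$, a network decomposition with $\kappa=O(\sqrt{\log n})$ colors — which forces a large diameter $d=2^{O(\sqrt{\log n})}$, and a round complexity that is a large but irrelevant function of $n$ — equipped with the per-cluster trees above. I would build the color classes one at a time, in $\kappa$ phases: in phase $j$, run a ball-carving step on the subgraph induced by the still-unclustered nodes with carving radius $2^{\Theta(\sqrt{\log n})}$, producing a new color class of pairwise non-adjacent clusters of diameter $2^{O(\sqrt{\log n})}$ and leaving unclustered only a $2^{-\Theta(\sqrt{\log n})}$ fraction of the nodes that entered the phase; since $\big(2^{-\Theta(\sqrt{\log n})}\big)^{\Theta(\sqrt{\log n})}\!\cdot n<1$, after $\kappa=\Theta(\sqrt{\log n})$ phases every node is clustered. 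The point is to make each phase cost only $O(\log^\star n)$ awake rounds despite its radius being $2^{\Theta(\sqrt{\log n})}$: (a) all long-range communication in phase $j$ is routed over the clusters built in phases $1,\dots,j-1$ (whose trees and depth labels are available), so that one ``round'' of the carving over this coarsened graph is simulated with $O(1)$ awake rounds per node exactly as in Step 1, and the combinatorial decisions of the step (which center claims which node, which clusters merge) reduce to a Linial-type $O(\log^\star n)$-round coloring on that coarsened graph; (b) a globally pre-agreed doubling wake-up schedule — nodes synchronize at power-of-two rounds and otherwise wake only at the rounds dictated by their now-known distances — keeps the cost of not knowing the radius in advance down to $O(\log^\star n)$ per phase. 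Summing over the $O(\sqrt{\log n})$ phases yields awake complexity $O(\sqrt{\log n}\cdot\log^\star n)$, and Step 1 then finishes the proof. (For completeness one can also keep the $O(\log\Delta+\log^\star n)$ algorithm of \cite{barenboimM21} for the regime $\Delta\le 2^{\sqrt{\log n}}$, although the construction above is meant to be uniform in $\Delta$.)

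\textbf{Main obstacle.} I expect the crux to be reconciling the super-polylogarithmic cluster diameters $2^{\Theta(\sqrt{\log n})}$ with a \emph{sub-logarithmic} awake budget: every primitive applied to a diameter-$d$ cluster — rooting or re-rooting a tree, recomputing depth labels after a merge, convergecast, broadcast, growing a ball — has to run with $O(1)$, or at worst $O(\log^\star n)$, awake rounds per node rather than $\Theta(d)$ or even $\Theta(\log d)$, and these contributions must not pile up beyond $O(\log^\star n)$ within a phase or $O(\sqrt{\log n}\cdot\log^\star n)$ over all phases. Designing the pre-agreed wake-up schedule and the coarsened-graph simulation carefully enough to guarantee this — especially recomputing depth labels after a merge without a $\Theta(\log d)$ blow-up — is where most of the technical effort will go; checking that processing the color classes in the stated order faithfully realizes a legal run of the sequential greedy algorithm is a comparatively routine, if necessary, verification.
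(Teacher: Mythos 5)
Your two-step architecture (compute a low-quality network decomposition cheaply in the \sleeping model, then reduce any \olocal problem to it) is the same as the paper's, and your Step~1 is essentially the paper's Theorem~\ref{thm:solving-given-nd}: choose the acyclic orientation to respect cluster colors, let each cluster's root simulate the greedy rule internally, and use depth-labelled convergecast/broadcast so that a cluster of diameter $d$ costs $O(1)$ awake rounds. One remark, though: your color-by-color processing costs $\Theta(\kappa)$ wake-ups per node (each fixed node must wake for every later propagation round), which is only affordable because you insist on $\kappa=O(\sqrt{\log n})$ colors. The paper instead tolerates $2^{O(\sqrt{\log n})}$ colors and absorbs them with the Barenboim--Maimon binary-tree schedule (Lemma~\ref{lem:mapping-and-function}), which processes $c$ color classes with only $O(\log c)$ wake-ups per node. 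This is not a cosmetic difference: it is what lets the decomposition side of the argument be far weaker.

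The genuine gap is in your Step~2. To carve a ball of radius $2^{\Theta(\sqrt{\log n})}$ in the residual subgraph of still-unclustered nodes, a node on the frontier must be awake when the BFS wave reaches it, and it cannot know that round in advance because its distance to the center is exactly what the ball-growing is computing; the $O(1)$-awake convergecast/broadcast of Lemma~\ref{lem:broad-converge-cast} presupposes labels $L$ with $L(v)>L(p(v))$ that are known \emph{before} the traversal starts. Your two fixes do not close this. Routing over clusters of phases $1,\dots,j-1$ does not help, because those clusters are disjoint from the residual subgraph and do not control distances inside it. And the doubling wake-up schedule costs $\Theta(\log d)=\Theta(\sqrt{\log n})$ wake-ups per node per phase, not the $O(\log^\star n)$ you assert, giving $\Theta(\log n)$ overall --- no better than \cite{barenboimM21}. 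The paper avoids this chicken-and-egg problem entirely: in each phase it does no ball growing, but instead computes a distance-2 $O(n^4)$-coloring in $O(\log^\star n)$ awake rounds, points every node to a locally minimal color in $G^2$, and uses the (re-derived) colors themselves as the monotone labels $L$ needed for the $O(1)$-awake tree traversals; the roots of the surviving trees are pairwise at distance $\geq 3$ and have degree $>b$, so the number of \emph{clusters} (not nodes) drops by a factor $b=2^{\sqrt{\log n}}$ per phase, and Lemma~\ref{lem:simulation-in-virtual-graph} lets the recursion run on the virtual graph with constant awake overhead. Without an ingredient of this kind --- a way to build radius-$2^{\Theta(\sqrt{\log n})}$ clusters whose wake-up schedule is known before the clusters exist --- your Step~2 does not meet the awake budget, so the proof as proposed is incomplete.
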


In particular, for $\Delta\gg 2^{\sqrt{\log n}\cdot \log^\star n}$, e.g., $\Delta=n^\epsilon$ for some arbitrarily small $\epsilon>0$, this improves the algorithm from~\cite{barenboimM21}. The round complexity of our algorithm is $O(n^5 \sqrt{\log n})$. However, if the IDs assigned to the $n$ nodes are taken from $\{1,\ldots,n\}$, then this round complexity improves to $O(n^2 \sqrt{\log n})$. More generally,  if the IDs are taken from $\{1,\ldots,n^c\}$ with $c\geq 1$, then the round complexity of our algorithm becomes $O(n^{\min\{1+c,5\}} \sqrt{\log n})$.

Our result is obtained thanks to an interplay between two similar forms of \emph{network decomposition}. Both are partitioning the nodes of the input graph $G=(V,E)$ into \emph{clusters} (i.e., connected induced subgraphs of~$G$). Every cluster $C$ has a  root~$r$, and a label~$x$. It is encoded distributedly by having each node $v\in C$ storing a pair $(\ell(v),\delta(v))$ where $\ell(v)=x$, and $\delta(v)$ is the distance from $v$ to $r$ in~$C$. We thus refer to our network decomposition as \emph{\bfs-clustering}. Our two forms of clustering differ according to the labeling of the clusters. The \emph{uniquely-labeled} \bfs-clustering requires that no two clusters have the same label. Instead, the \emph{colored} \bfs-clustering allows different clusters to have the same label, as long as they are not adjacent. That is, clusters $C$ and $C'$ of $G=(V,E)$ may have the same label, but then it must be the case that, for every $v\in C$ and every $v'\in C'$, $\{v,v'\}\notin E$.

The former type of clustering enables induction. In fact, in the virtual graph resulting from merging each cluster into a single vertex, it is possible to run algorithms for the \local model with only a constant overhead on the awake complexity. The latter type of clustering is weaker, yet it is sufficient to apply previous results of the literature on that form of clustering, such as the one in~\cite{barenboimM21}, which assumes a given proper $k$-coloring.  

We establish Theorem~\ref{thm:olocal-result} in two steps. First, we show in Theorem~\ref{thm:solving-given-nd} that, for any $n$-node graph $G=(V,E)$, and any $\Pi\in \olocal$, if a colored  \bfs-clustering of~$G$ is given to the nodes, then $\Pi$ can be solved by a distributed algorithm with awake complexity $O(\log c)$, and round complexity $O(c \cdot n)$ where $c$ is the range of colors used by the colored \bfs-clustering. Second, we show in Theorem~\ref{thm:nd} that, for any $n$-node graph $G= (V,E)$, a  colored  \bfs-clustering with $2^{O(\sqrt{\log n})}$ colors can be computed by a distributed algorithm with awake complexity $O(\sqrt{\log n} \cdot \log^* n)$, and round complexity $O(n^5 \sqrt{\log n})$. Theorem~\ref{thm:olocal-result}  follows directly from these two theorems. 

\subsection{Additional Related Work}

The \sleeping model was introduced in~\cite{chatterjee2020sleeping}, which opened a large avenue of research dedicated to measuring the potential benefit of providing the nodes of a distributed system with the ability to turn on and off at will. 
In particular, MIS and $(\Delta+1)$-coloring have attracted a lot of attention in this context (see \cite{chatterjee2020sleeping,GhaffariP2022,GhaffariP2023,HouraniPR22,DufoulonFRZ2024,barenboimM21}).
The aforementioned paper~\cite{barenboimM21} on solving \olocal problems established that both $(\Delta+1)$-coloring and MIS have deterministic awake complexity  $O(\log\Delta+\log^\star n)$.
A direct consequence of the result in our paper is that MIS and $(\Delta+1)$-coloring have deterministic awake complexity $O(\sqrt{\log n}\cdot\log^\star n)$.
Regarding randomized algorithms, it was shown that MIS and  $(\Delta+1)$-coloring have awake complexity $O(\log\log n)$ \cite{DufoulonMP23,DufoulonFRZ2024}.
This presents an improvement over the best known randomized algorithms for $(\Delta+1)$-coloring in the \local model~\cite{ChangLP18,HalldorssonKNT22}, which have round complexity $\tilde{O}(\log^2\log n)$, though to the expense of a significant increase of the round complexity. 
Several other problems have been studied in the \sleeping model, including matching~\cite{GhaffariP2022}, vertex cover~\cite{GhaffariP2022}, spanning tree~\cite{barenboimM21}, and minimum-weight spanning tree~\cite{AugustineMP24}. 

We note that notions that are closely related to the one of awake complexity have been investigated in the context of radio networks (see e.g., \cite{jurdzinski2002energy, chang2019exponential, nakanoO00a, kardasKP13, jurdzinski2002efficient, chang2020energy, benderKPY18}), and also in the context of the \local model, such as the line of work that studies the average complexity of graph problems in the \local model (see e.g., \cite{barenboim2019distributed, feuilloley2020avg, balliu0KOS24, balliu0KOS23, balliu0KO22}).

\section{Model and Definitions}

\subsection{The \sleeping Model}

We consider the \sleeping variant of the standard \local model~\cite{Peleg2000}. Recall that the \local model assumes $n$ nodes connected as an $n$-node network, modeled as a simple connected graph $G=(V,E)$. Every node has an identifier, which is unique in the network. The identifiers are supposed to be taken from a polynomial range of identifiers, and thus every identifier can be stored on $O(\log n)$ bits in $n$-node graphs. Computation proceeds synchronously as a sequence of rounds. All nodes starts at round~1. At each round, every node can send a message to each of its neighbors, receive the messages from its neighbors, and perform some internal computation. There are no limits on the size of the messages, nor on the amount of internal computation a node may perform at each round. 

At each round of a distributed algorithm for the \local model, the \sleeping model provides each node $v$ with the additional capacity to become ``asleep'' (i.e., to turn off) for a prescribed number of rounds, which is chosen at will by~$v$. A node  executing an algorithm in the \sleeping model thus alternates between sequences of rounds during which it is ``awake'' (i.e., is on), and sequences of rounds during which it is asleep. If a node is awake at a given round, it acts like in the \local model. However, a node $v$ that is asleep at a round is totally inactive at this round. In particular, all messages that may have been sent to $v$ by its neighbors at this round are lost, and they cannot be recovered (unless sent again by the neighbors at some later round during which $v$ is awake). Initially, all nodes are awake. We assume that every processing node $v$ initially knows the order $n$ of the graph $G=(V,E)$ it belongs to.

The \emph{round complexity} of an algorithm $A$ running under the \sleeping model is the maximum, taken over all graphs $G$ of at most $n$ nodes, and over all nodes $v$ of $G$, of the  number of rounds required for $v$ to terminate. The \emph{awake complexity} of $A$, is the maximum, taken over all graphs $G$ of at most $n$ nodes, and over all nodes $v$ of $G$, of the  number of rounds during which $v$ is awake. 

\subsection{The \olocal Class of Graph Problems}

\olocal,  which stands for Oriented-\local~\cite{barenboimM21}, is a class of graph problems whose every solution consists of an appropriate labeling of the nodes of the input graph (e.g., $(\Delta+1)$-coloring, maximal independent set (MIS), etc.) that can be computed by a sequential greedy algorithm performing as follows. Let $G = (V,E)$ be a (non-directed) graph, let $\mu$ be an arbitrary acyclic orientation of the edges of~$G$, and let $G_\mu$ be the resulting directed graph. For every $v \in V$, let $G_{\mu}(v)$ be the subgraph of $G_\mu$ induced by $v$ and all the nodes that can be reached from $v$ by following outgoing edges only. That is, $G_{\mu}(v)$ contains all nodes $w$ such that there is a path from $v$ to $w$ in the directed graph~$G_\mu$. The greedy algorithm picks nodes in arbitrary order, but respecting the orientation~$\mu$. That is, a node $v$ can be picked only if the output of all the nodes in $G_\mu(v)\smallsetminus \{v\}$ has been previously computed by the algorithm. Let $v$ be such a node. The algorithm must be able to compute a correct output for $v$ based only on the outputs computed previously by all nodes in $G_\mu(v) \setminus \{v\}$. A graph problem $\Pi$ is in \olocal if there is a greedy algorithm that succeeds to compute a solution for $\Pi$ for every input graph~$G$, and for every acyclic orientation~$\mu$ of the edges of~$G$. 

The class \olocal is, by definition, a subset of all graph problems that can be solved by a greedy algorithm fixing the outputs of the nodes by treating them sequentially, in arbitrary order. In particular, \olocal contains problems at the core of distributed computing in networks, including (proper) $(\Delta+1)$-coloring of graphs with maximum degree~$\Delta$, and maximal independent set (MIS). Distance-2 $(\Delta^2+1)$-coloring is however not in \olocal, as witnessed by the $n$-node path~$P$, with $n\geq 6$. Indeed, let $\mu$ be the edge orientation of $P$ such that every two incident edges are directed in opposite direction. For this orientation, the color of every node $v$ with out-degree~0 in $P_\mu$ must be fixed without any other information than the ID of~$v$. For every $f:\{1,\dots,n\}\to \{1,\dots,5\}$, there is an assignment of IDs to the nodes of $P$ such that the color assignment $f$ collides for two vertices at distance~2 in~$P$.  

\subsection{Definitions}

We define a collection of concepts that will be at the core of the technical parts of the paper. Specifically, we define two forms of network decomposition, a.k.a.~clustering, and we associate a virtual graph to each of these two decompositions, referred to as \emph{uniquely-labeled \bfs-clustering} and  \emph{colored BFS-clustering}. \bfs refers to the fact that, in each cluster, there is a special node, called the root, and every node in the cluster knows its distance to the root. A uniquely-labeled \bfs-clustering assigns a label to each cluster, and this label is unique, i.e., no two clusters have the same label. This enables to recurse in the virtual graph resulting from collapsing every cluster into a single vertex, and assigning the label of the cluster as ID of the vertex. If one enforces conditions on the number of labels, computing a uniquely-labeled \bfs-clustering is not easy. In such a case, colored \bfs-clusterings are easier to compute, but two distinct cluster may be given the same label (now refer to as color), as long as there are no edges between them in the graph. The issue with colored BFS-clusterings is that one may not be able to recurse in the virtual graphs induced by these \bfs-clustering, whenever considering algorithms in the \local model, which assume distinct IDs, yet they are sufficient for applying known results (e.g., algorithms in~\cite{barenboimM21}, which assume a given $k$-coloring). We thus play with the two types of \bfs-clustering for benefiting of the best of each of them. 

\begin{definition}\label{def:bfs-cluster}
     A \emph{uniquely-labeled  \bfs-clustering} of a graph $G=(V,E)$ is a pair of functions $(\ell,\delta)$ assigning a pair $(\ell(v),\delta(v))\in \mathbb{N}\times\mathbb{N}$ to each node $v\in V$ such that, for every integer $i>0$, if the subgraph $G_i$ of $G$ induced by the nodes in $\{v\in V\mid \ell(v)=i\}$ is non-empty, then $G_i$ is connected, there is a unique node $u$ of $G_i$ with $\delta(u)=0$, and, for every node $v$ of $G_i$, $\delta(v)$~is the distance  from $u$ to $v$ in~$G_i$. 
\end{definition}

Given a uniquely-labeled  \bfs-clustering $(\ell,\delta)$ of $G=(V,E)$, and $i\in \mathbb{N}$, the set of nodes $\{v\in V\mid \ell(v)=i\}$ is called a \emph{cluster}, $\ell(v)$ is called the \emph{cluster-ID} of~$v\in V$, and we somewhat abuse terminology by saying that node $v$ with $\ell(v)=i$ belongs to the cluster~$i$, or the $i$-th cluster.

\begin{definition}
    The \emph{virtual graph induced by a uniquely-labeled  \bfs-clustering} $(\ell,\delta)$ of a graph $G=(V,E)$ is the graph $H$ with vertex-set equal to the set $\{\ell(v)\mid v\in V\}$ of cluster-IDs, and two vertices $i$ and $j$ of~$H$ are neighbors in~$H$ if there exists an edge $\{u,v\}\in E$ such that $\ell(u)=i$ and $\ell(v)=j$. 
\end{definition}

That is, $H$ can be viewed as the graph obtained from $G$ by merging each cluster  of the  uniquely-labeled  \bfs-clustering into one vertex, and replacing every set of parallel edges resulting from  this merge by a single edge. 

\begin{definition}
    A \emph{ colored  \bfs-clustering} of a graph $G=(V,E)$ is a pair of functions $(\gamma,\delta)$ assigning a pair $(\gamma(v),\delta(v))\in \mathbb{N}\times\mathbb{N}$ to every node $v \in V$ such that, for every $i\in \mathbb{N}$, the subgraph $G_i$ of $G$ induced by the nodes $\{v\in V\mid \gamma(v)=i\}$ satisfies that, for every connected component $C$ of~$G_i$, there is a unique node $u$ of $C$ with $\delta(u)=0$, and, for every $v\in C$, $\delta(v)$ is the distance between $u$ and $v$ in $C$.
\end{definition}

Given a  colored  \bfs-clustering $(\gamma,\delta)$ of a graph $G=(V,E)$, for every $i\in \mathbb{N}$, and every connected component of the subgraph $G_i$ of~$G$ induced by the nodes $\{v\in V\mid \gamma(v)=i\}$, we say that $C$ is a \emph{cluster}, and that $i$ is the \emph{color} of that cluster. Note that, in a  uniquely-labeled  \bfs-clustering of a graph~$G$, each set of nodes with identical label is connected, and the ID of the corresponding cluster is unique (no other clusters have the same ID). In contrast, in a  colored  \bfs-clustering, non-adjacent clusters are allowed to have the same~ID, which motivates the terminology ``color''.

\begin{definition} 
The \emph{virtual graph induced by a  colored  \bfs-clustering} of a graph $G=(V,E)$ is the graph $H$ with vertex-set equal to the set of clusters resulting from the clustering,  and two vertices $C$ and $C'$ of $H$ are neighbors in $H$ if there exist $\{u,v\}\in E$ such that  $u\in C$ and $v\in C'$.
\end{definition}

As for  uniquely-labeled  \bfs-clustering, the virtual graph $H$ induced by a  colored  \bfs-clustering can be viewed as the graph obtained from $G$ by merging each cluster $C$ of the clustering into one vertex, and replacing every set of parallel edges resulting from  this merge by a single edge. 

\section{Prelimilaries}

In this section, we prove some useful statements about the Sleeping model.
We start by stating the following lemma, which is implied by Lemma~2.1 in~\cite{barenboimM21}, but for which we provide a sketch of proof that can serve as a warm up for the reader unfamiliar with the Sleeping model.

\begin{lemma}[Barenboim and Maimon \cite{barenboimM21}]\label{lem:broad-converge-cast}
    Let $G=(V, E)$ be a graph, and let $T$ be a spanning tree of $G$ rooted at $r\in V$. For every $v \in V\smallsetminus \{r\}$, let $p(v)$ denote the parent of $v$ in~$T$. Let $L:V\to \{1,\dots,N\}$ be a labeling such that $L(v) > L(p(v))$ for every $v \in V\smallsetminus \{r\}$. Let us assume that every node $v$ is given its parent $p(v)$ (if $v\neq r$), its label $L(v)$, and~$N$. Each of the two tasks broadcast and  convergecast can be performed in $G$ under the Sleeping model, with awake complexity~3, and round complexity~$O(N)$.
\end{lemma}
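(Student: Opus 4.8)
The plan is to schedule the wake-up rounds of the nodes according to their labels $L(v)$, so that each node is awake for exactly a constant number of rounds, and the two operations (broadcast from the root, convergecast to the root) are completed in $O(N)$ rounds total. The key observation is that the label condition $L(v) > L(p(v))$ means that, along any root-to-leaf path in $T$, labels strictly increase; hence a node and its parent always have different labels, and in fact every edge $\{v, p(v)\}$ joins a node with a strictly smaller label (the parent) to one with a strictly larger label (the child). We will use the label of a node as a ``time slot'' telling it when to be awake to talk to its parent.

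\emph{Broadcast.} First I would handle broadcast, where $r$ holds a message that must reach every node. The schedule is: node $v$ wakes up at round $L(v)$, and also at round $L(v)+1$ (two rounds for $v\neq r$; the root is additionally awake at round~1). Concretely, at round~1 the root sends its message to all its children; in general, at round~$i$, every node $w$ with $L(w) = i$ that is currently holding the message sends it to all its children, and every node $v$ with $L(v) = i+1$ wakes up to listen. Since $L(p(v)) < L(v)$, when node $v$ wakes at round $L(p(v))$ (no — rather, $v$ must be awake at round $L(p(v))$) it would need to know $L(p(v))$, which it is not given. To avoid this, I instead have every node $v$ be awake at rounds $L(v)$ and $L(v)+1$: parent $p(v)$ with label $L(p(v)) = \ell$ sends the message at round $\ell+1$ (the second of its two awake rounds), and since $L(v) \ge \ell+1$, hmm, this still requires alignment. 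The clean fix is the standard one: have node $v$ awake at rounds $L(v)$ and $L(v)+1$; a node sends to its children on round $L(v)+1$ only if it already received the message on round $L(v)$ — but the root seeds at round~1, and one shows by induction on $L$ that a node with label $\ell$ receives the message at round $\ell$. The induction step works because $L(p(v)) + 1 \le L(v)$, so the parent's send (at round $L(p(v))+1$) happens no later than round $L(v)$, and one must also ensure the parent does not send ``too early'' relative to $v$'s wake window; this forces using a slightly more careful schedule where $v$ is awake on rounds $L(v)-1, L(v), L(v)+1$, still a constant. This is exactly the place I expect to have to be careful: reconciling ``$v$ does not know $L(p(v))$'' with ``$v$ must be listening when $p(v)$ transmits,'' which is why a window of constant width (rather than a single round) is needed, and why the awake complexity is $3$ rather than $2$.

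\emph{Convergecast.} For convergecast, where each node has a value and $r$ must aggregate them (or just receive them all), I would run the symmetric schedule in reverse time, or equivalently use the mirror labeling $N+1-L(v)$: a leaf, or more generally a node that has received the contributions of all its children, sends its aggregated value up to its parent. A node $v$ with label $L(v) = \ell$ sends to $p(v)$ at round (roughly) $N - \ell$, and $p(v)$, whose label is smaller, is awake in a constant window around round $N - L(p(v)) > N - \ell$ to collect from all children — the children all have strictly larger labels than $p(v)$, so they all send before $p(v)$'s collection round, and $p(v)$ can then forward upward. The same induction as before, now on $N - L(v)$, shows correctness, and again each node is awake only a constant number of rounds, with the last transmission reaching $r$ by round $O(N)$.

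\emph{Wrap-up.} Combining the two, the total round complexity is $O(N)$ (broadcast in $\le N + O(1)$ rounds, convergecast in $\le N + O(1)$ rounds, run one after the other, $2N + O(1)$ rounds total), and since each node is awake in a constant-width window around round $L(v)$ during broadcast and around round $N+1-L(v)$ during convergecast, the awake complexity is $3$. The main obstacle, as noted, is the schedule design for broadcast: a node does not know its parent's label, so it cannot point to a single round at which to listen, and one must argue that a constant-width awake window suffices and that no collision of timing forces the window to grow with $N$ or $\Delta$. Once that scheduling gadget is in place, both claims follow by a straightforward induction on the label value.
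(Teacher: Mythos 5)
There is a genuine gap, and it sits exactly at the point you flag as delicate. Your schedule keys every transmission to the \emph{sender's} label and has each node awake only in a constant-width window around its \emph{own} label. But the hypothesis only gives $L(v) > L(p(v))$; the gap $L(v) - L(p(v))$ can be as large as $N-1$. In the broadcast, the parent sends at round $L(p(v))+1$ while $v$ is listening only in $[L(v)-1, L(v)+1]$, so whenever $L(v) - L(p(v)) > 2$ the message arrives while $v$ is asleep and, in the \sleeping model, is irretrievably lost. No constant-width window can repair this — the window would have to cover the whole interval $[L(p(v)), L(v)]$, i.e., grow to $\Theta(N)$ in the worst case. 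The symmetric problem occurs in your convergecast: the children of $v$ have arbitrary labels larger than $L(v)$, so they transmit at scattered rounds $N-\ell_i$, all of which can fall outside $v$'s constant window around $N - L(v)$.

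The missing idea is to spend one awake round, at round~1 (when all nodes are awake by definition of the model), having every node exchange labels with its neighbors so that $v$ learns $L(p(v))$. This is what the paper does, and it dissolves the difficulty you identified: for broadcast, $v$ then wakes at exactly round $2+L(p(v))$ to receive (the round at which its parent transmits, since the parent transmits at the round indexed by its \emph{own} label, namely $2+L(p(v))$) and at round $2+L(v)$ to forward to its children; for convergecast one uses the mirrored labels $L'(v)=N-L(v)$ and has $v$ wake at round $2+L'(v)$ to collect from all children (each child knows $L'(p)$ and therefore sends precisely at that round) and at round $2+L'(p(v))$ to forward upward. Each node is awake exactly $3$ rounds (round~1 plus two scheduled rounds), which is where the constant $3$ in the statement comes from — not from a window of width $3$. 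Your induction on labels then goes through verbatim once the schedule is keyed so that sender and receiver agree on a common round determined by the label of the node \emph{closer to the root}.
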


\begin{proof}[Sketch of proof]
Let $M$ be the message to be broadcasted by $r$ to all the nodes. At the first round, every node $v$ communicates with its neighbors for learning the label $L(p(v))$ of its parent, and becomes idle. Every node then wakes up only twice. Specifically, every node $v$ wakes up at round $2 + L(p(v))$ during which it communicates with its neighbors for learning~$M$ (that it receives from its parent). Node $v$ then becomes idle, and wakes up again at round $2 + L(v)$ during which it sends $M$ to all its neighbors, and thus in particular to its children in~$T$.

Convergecast is performed in a way similar to broadcast. For every node $v\in V$, let $M_v$ be the message of~$v$. First, every node $v$ assigns itself a new label $L'(v)= N - L(v)$. Note that this new labeling  satisfies $L'(v)<L'(p(v))$ for every $v\neq r$. At the first round, every node~$v$ communicate with its neighbors for learning the label $L'(p(v))$ of its parent, and becomes idle. As for broadcast, every node then wakes up only twice.  Specifically, node $v$ wakes up at round $2 + L'(v)$ during which it receives from the children all convergecast messages from nodes in the subtrees of $T$ rooted at these children.  Node $v$ then becomes idle, and wakes up again at round $2 + L'(p(v))$ during which $v$ forwards to its parent all the messages it collected during the previous rounds.  
\end{proof}

The following lemma establishes a tight connection between a  uniquely-labeled  \bfs-clustering of a graph $G$ and the virtual graph $H$ resulting from that clustering. 

\begin{lemma}\label{lem:simulation-in-virtual-graph}
Let $G=(V,E)$ be an $n$-node graph, and let $H$ be  the virtual graph induced by some  uniquely-labeled  \bfs-clustering $(\ell,\delta)$ of $G$.
Let $\mathcal{A}$ be a distributed algorithm running on $H$ with awake complexity~$\alpha$ and round complexity~$\varrho$. For every vertex $i$ of~$H$, let $\linput(i)$ be the input of vertex $i$ of $H$, and let $\loutput(i)$ be the output of $\mathcal{A}$ at vertex~$i$. Assuming each node $v$ of $G$ knows $\linput(\ell(v))$, it is possible to run $\mathcal{A}$ on $G$ such that every node $v$ of $G$ computes $\loutput(\ell(v))$ with awake complexity at most~$7\cdot \alpha$ and round complexity $O(\varrho \cdot n)$.
\end{lemma}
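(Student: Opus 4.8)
The plan is to simulate each round of $\mathcal{A}$ on $H$ by a bounded-awake, $O(n)$-round "phase" on $G$, using the \bfs\ structure inside each cluster together with Lemma~\ref{lem:broad-converge-cast} to implement intra-cluster communication. First I would set up the correspondence: each vertex $i$ of $H$ is a cluster, which is a connected induced subgraph $G_i$ of $G$; by Definition~\ref{def:bfs-cluster} it has a unique root $u_i$ with $\delta(u_i)=0$, and the values $\delta(\cdot)$ encode a BFS tree $T_i$ of $G_i$ rooted at $u_i$ in which each node's parent is any neighbor inside $G_i$ with strictly smaller $\delta$-value. The label $L(v)=\delta(v)+1$ then satisfies $L(v)>L(p(v))$, and $N\le n$, so Lemma~\ref{lem:broad-converge-cast} applies to broadcast/convergecast within each cluster with awake complexity $3$ and round complexity $O(n)$ per call. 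Crucially, all clusters can run their intra-cluster broadcasts/convergecasts in parallel over disjoint round-windows of length $O(n)$; since a node belongs to exactly one cluster, its awake count is unaffected by what happens in other clusters, and the total number of rounds is still $O(n)$ per simulated round of $\mathcal{A}$.

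Next I would describe one simulated round of $\mathcal{A}$. Assume inductively that, at the start of the phase, every node $v$ knows the full state of vertex $\ell(v)$ in $\mathcal{A}$ — in particular the messages vertex $\ell(v)$ wants to send to each neighbor in $H$. A single round of $\mathcal{A}$ at vertex $i$ consists of: (a) sending a message across each incident edge of $H$; (b) receiving the messages sent by neighboring vertices; (c) updating the local state. To implement (a)–(b) on $G$: each cluster $G_i$ first runs a convergecast up $T_i$ so the root $u_i$ learns, via the edges of $G$ crossing out of $G_i$, which clusters $j$ are adjacent and which $G$-edges realize each adjacency (this is learnable because every node $v\in G_i$, upon one round of communication with its $G$-neighbors, sees the labels $\ell(w)$ of neighbors in other clusters); then the root aggregates the outgoing messages for each neighbor $j$ and broadcasts down $T_i$ the assignment "on the edge $\{v,w\}$ with $\ell(w)=j$, node $v$ should transmit message $m_{i\to j}$"; then in one extra round every boundary node $v$ actually sends $m_{\ell(v)\to \ell(w)}$ to each cross-cluster neighbor $w$ and receives the symmetric messages; finally a convergecast in each cluster brings all received messages to the root. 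This uses a constant number of broadcast/convergecast invocations plus a constant number of extra "everyone awake for one round" steps, hence awake complexity $O(1)$ and round complexity $O(n)$ for step (a)–(b). For step (c), the root of $G_i$ now holds all messages received by vertex $i$, computes the new state of $i$ in $\mathcal{A}$, and broadcasts it down $T_i$ so the inductive hypothesis is re-established; this is one more broadcast.

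The awake bookkeeping is the delicate point, and it is where I would be most careful. A naive count gives several broadcast/convergecast calls (each costing awake $3$) plus a few single-round steps per simulated round, which is $O(1)$ per round but could blow up across rounds to $O(\alpha)$ with a large constant; the claimed bound is $7\alpha$, so the constant matters and I would need to pipeline the phases tightly — e.g., reuse the "learn your parent" round across all phases (it is done once, not per phase), merge the last broadcast of one phase's state-update with the first convergecast of the next phase's message-exchange, and observe that a boundary node's single "send/receive across cut" round can double as the round in which it also participates in a tree operation. The target is that each simulated round of $\mathcal{A}$ costs the clusters' nodes at most $7$ awake rounds *amortized* — or, more simply, one shows each simulated round contributes a bounded number of awake rounds and that a node awake during round $t$ of $\mathcal{A}$ need only be awake $7$ rounds of the corresponding $G$-phase while all other $G$-nodes in its cluster stay asleep during the phases where $\mathcal{A}$ does not wake vertex $\ell(v)$. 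Summing over the $\alpha$ rounds in which vertex $\ell(v)$ is awake, node $v$ is awake at most $7\alpha$ times; summing the $O(n)$-round phases over the $\varrho$ rounds of $\mathcal{A}$ gives round complexity $O(\varrho\cdot n)$. The main obstacle, then, is not the construction but squeezing the awake constant down to $7$ by carefully overlapping the tree passes between consecutive simulated rounds and handling the asymmetry between root nodes, boundary nodes, and interior nodes.
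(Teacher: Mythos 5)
Your overall strategy is exactly the paper's: simulate each round of $\mathcal{A}$ by an $O(n)$-round phase on $G$, use the $\delta$-values to define parents and invoke Lemma~\ref{lem:broad-converge-cast} for intra-cluster convergecast/broadcast, and charge awake rounds only to phases in which the simulated $H$-vertex is itself awake, so that the total is a constant times $\alpha$ rather than $\varrho$. That last observation (sleeping clusters cost nothing) is the key point and you have it right.

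Where you diverge is in the per-phase protocol, and this is why you end up fighting for the constant $7$ and resorting to vague pipelining/amortization. Your protocol has the root first learn the cluster's external adjacencies, then broadcast per-edge transmission instructions to boundary nodes, then has boundary nodes exchange across the cut, then convergecasts the received messages, then broadcasts the updated state --- roughly four tree passes plus extra rounds. The paper's protocol is much simpler and makes all of this orchestration unnecessary: in the \emph{first} round of an active phase, every node $v$ of the cluster wakes up and simply sends its entire current state (which, by the inductive hypothesis, already contains the full state of vertex $\ell(v)$ in $\mathcal{A}$, hence all outgoing messages) to \emph{all} its neighbors in $G$, and symmetrically receives the states of all awake neighbors, including those in adjacent clusters. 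This single round accomplishes the entire cross-cluster message exchange at once; there is no need for the root to discover which edges cross the cut or to tell boundary nodes what to transmit. One convergecast then gathers everything received from neighboring clusters at the root, and one broadcast disseminates it, after which every node can locally recompute the new state of vertex $\ell(v)$. The count is $1+3+3=7$ awake rounds per active phase, with no overlapping of tree passes between phases and no amortization. So your proof is salvageable with a constant somewhat larger than $7$ (which would still suffice for every use of the lemma in the paper), but as written it does not establish the stated bound; replacing your boundary-node choreography with the ``everyone blasts its state to all neighbors in round one'' step closes that gap.
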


\begin{proof}
    We prove that each round that $\mathcal{A}$ performs on $H$ can be simulated on $G$ by having each node of $G$ awake for at most $7$ rounds during an interval of $R = O(n)$ rounds, where $R$ is the round complexity of computing a convergecast and a broadcast (see \Cref{lem:broad-converge-cast}).
    Let $v \in V$ and let $i = \ell(v)$. Node $v$ of $G$ simulates node $i$ of $H$ as follows. 
    Let us call a phase a series of $R$ consecutive rounds. We prove by induction on $k$ that, (1)~after $k$ phases (i.e., after $k\cdot R$ rounds), node $v$ is able to compute the state that node $i$ of $H$ would have by executing $k$ rounds of $\mathcal{A}$, and (2)~if up to this point node $i$ has been awake for $r$ rounds, then $v$ has been awake for at most $7r$ rounds.

    The base case $k=0$ trivially holds. Suppose that the claim holds for all $k'$ with $0\leq k' < k$, and let us prove the claim for~$k$. We distinguish two cases, depending on whether $i$ was awake at round $k$ or not. In the latter case, $v$ does not wake up at all during phase~$k$, and the claim trivially holds. In the former case, $v$ wakes up during the first round of phase $k$, and shares its state with all its (awake) neighbors in~$G$. If $\delta(v) > 0$, then $v$ picks  as parent any neighbor $u$ that belongs to the same cluster as~$v$, and satisfying $\delta(v) = \delta(u) + 1$. Then, $v$ performs a convergecast, followed by a broadcast, as specified in \Cref{lem:broad-converge-cast}. Hence, in total, $v$~was awake during at most $7$~rounds during phase~$i$. During that phase, $v$~learned the state of all the neighbors of~$i$ in~$H$, and thus it can compute the new state that node $i$ has computed at round~$k$.
\end{proof}

We now observe that, similarly to what holds in \local model, the execution of two algorithms can be concatenated into one, assuming that all nodes know when the first algorithm terminates. 

\begin{lemma}\label{lem:concatenate}
    Let $\mathcal{A}_1$ and $\mathcal{A}_2$ be two algorithms with respective awake complexities $S_1$ and~$S_2$, and respective round complexities $T_1$ and $T_2$.
    The algorithm obtained by running the two algorithms consecutively, first $\mathcal{A}_1$ for $T_1$ rounds, and then $\mathcal{A}_2$ for $T_2$ rounds, has awake complexity $S_1 + S_2$ (and, by design, round complexity $T_1 + T_2$).
\end{lemma}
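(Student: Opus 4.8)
The plan is to run $\mathcal{A}_1$ verbatim during rounds $1,\dots,T_1$, and then have every node simultaneously start executing $\mathcal{A}_2$ at round $T_1+1$, using the output (equivalently, the final internal state) of $\mathcal{A}_1$ at each node as the input of $\mathcal{A}_2$ there. What makes this legitimate is that $T_1$ is a worst-case round-complexity bound depending only on quantities every node knows (namely $n$, and the range of IDs), so every node can compute $T_1$ locally and agree on the round at which the first phase ends. A node that would terminate $\mathcal{A}_1$ at some round $r\le T_1$ instead puts itself to sleep for the remaining $T_1-r$ rounds; this is invisible to $\mathcal{A}_1$ (a terminated node sends nothing anyway) and adds no awake rounds.

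For correctness I would observe that the Sleeping-model semantics are invariant under a uniform shift of the round counter, and that in the combined execution every node is awake at round $T_1+1$ --- exactly the initialization condition that $\mathcal{A}_2$ assumes at its round~$1$. Hence the behavior of all nodes over rounds $T_1+1,\dots,T_1+T_2$ is identical to a stand-alone execution of $\mathcal{A}_2$ on the same inputs, so each node eventually outputs what $\mathcal{A}_2$ would output, and the round complexity of the combined algorithm is $T_1+T_2$ by construction.

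For the awake bound, partition the combined execution into the disjoint blocks $\{1,\dots,T_1\}$ and $\{T_1+1,\dots,T_1+T_2\}$. A node is awake at most $S_1$ times in the first block (by the awake complexity of $\mathcal{A}_1$; the sleep-padding contributes nothing) and at most $S_2$ times in the second (by the awake complexity of $\mathcal{A}_2$), for a total of at most $S_1+S_2$. The only subtlety worth checking --- essentially the sole obstacle --- is that a node awake at the last round of the first block and again at the first round of the second block should not be miscounted: since these are two distinct rounds lying in different blocks, such a node contributes one awake round to each summand, not two to a single one, so no overcount occurs.
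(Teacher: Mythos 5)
Your proposal is correct and follows essentially the same route as the paper's proof: run $\mathcal{A}_1$, have early-terminating nodes sleep until round $T_1+1$, then start $\mathcal{A}_2$ simultaneously, and sum the awake counts over the two disjoint blocks. The extra observations you add (that $T_1$ is locally computable from $n$, and that the semantics are shift-invariant) are sound and only make explicit what the paper leaves implicit.
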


\begin{proof}
Nodes first execute $\mathcal{A}_1$. When $\mathcal{A}_1$ terminates at node~$v$, sait at round $t\leq T_1$, it becomes idle for $T_1-t$ rounds, until round $T_1 + 1$. At round $T_1 + 1$ all nodes start executing $\mathcal{A}_2$. For every node~$v$, the number of rounds during which $v$ is awake is thus $S_1+S_2$.
\end{proof}

We will use \Cref{lem:concatenate} implicitly throughout the paper, without systematically referring to it.

\section{Solving any \olocal Problem Given a Colored \bfs-Clustering}

In this section, we prove that, if the nodes of a graph $G=(V,E)$ are given a colored \bfs-clustering of $G$ as input (i.e., every node knows its color, and its distance value in the clustering), then every problem in the \olocal class can be solved efficiently in~$G$. More formally, this section is entirely dedicated to proving the following statement.

\begin{theorem}\label{thm:solving-given-nd}
    Let $G=(V,E)$ be an $n$-node graph and let $\Pi$ be a problem in \olocal. Let $(\gamma,\delta)$ be a  colored  \bfs-clustering of~$G$, and let us assume that every node $v\in V$ initially knows its color $\gamma(v)$, and its distance~$\delta(v)$. Let $c=\max_{v\in V}\gamma(v)$. $\Pi$ can be solved by a distributed algorithm with awake complexity $O(\log c)$, and round complexity $O(c \cdot n)$.
\end{theorem}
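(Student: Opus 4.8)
The plan is to process the clusters color-class by color-class, in $c$ super-phases, so that when we reach color $j$ all clusters of smaller color already have a final output, and within color $j$ all clusters are pairwise non-adjacent and can therefore be handled in parallel. The output must be legal for $\Pi$ with respect to some acyclic orientation $\mu$ of $G$; I would fix $\mu$ to be the orientation in which every edge points from the endpoint in the lower-color cluster to the endpoint in the higher-color cluster, breaking ties inside a common cluster by orienting edges away from the cluster root (using the distance values $\delta$; if two neighbors have equal $\delta$, break the tie by ID). Acyclicity holds because an edge between two clusters always decreases the color, and inside a cluster the $(\delta,\text{ID})$ pair strictly decreases along each oriented edge; hence $G_\mu$ is a DAG, and by definition of \olocal a greedy algorithm respecting $\mu$ produces a valid solution. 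The key structural point is that, for a node $v$ in a cluster $C$ of color $j$, the set $G_\mu(v)\setminus\{v\}$ consists of (i) some nodes inside $C$ that are ``deeper'' in the cluster than $v$ and (ii) nodes lying in clusters of strictly smaller color, all of which have already been assigned outputs in earlier super-phases.

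Concretely, in super-phase $j$ I would do the following inside each color-$j$ cluster $C$ (all such $C$ act simultaneously and independently since they are non-adjacent). First, each node of $C$ collects, from its neighbors in lower-color clusters, their already-computed outputs; one awake round suffices for this exchange. Second, $C$ runs a convergecast toward its root followed by a broadcast away from the root, using \Cref{lem:broad-converge-cast} with the labeling $L(v)=\delta(v)$ (refined by ID to make it strictly increasing along tree edges), and with the BFS tree given by the $\delta$-values: during the convergecast the root gathers the whole topology of $C$ together with the boundary outputs collected in step one; the root then locally runs the sequential greedy algorithm on $C$, ordering the nodes of $C$ by decreasing $(\delta,\text{ID})$ so that the greedy order respects $\mu$ restricted to $C$, and using the boundary outputs as the already-fixed outputs of the external descendants; finally the broadcast distributes each node's computed output back down. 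This costs $O(1)$ awake rounds and $O(n)$ actual rounds per super-phase (since $|C|\le n$ and the round complexity of a convergecast/broadcast by \Cref{lem:broad-converge-cast} is $O(N)=O(n)$).

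If I simply concatenated all $c$ super-phases via \Cref{lem:concatenate} I would get awake complexity $O(c)$, which is too weak; the quoted target is $O(\log c)$. So the real work is to avoid paying one awake round per color. The fix is to schedule the super-phases in a balanced binary fashion reminiscent of parallel prefix: think of the $c$ colors arranged on a line; a cluster of color $j$ only needs information from the (already finalized) clusters adjacent to it, and adjacency can only hold between clusters whose colors differ — but crucially a cluster is adjacent only to finitely many clusters and each node has at most $\Delta$ neighbors, so a node in a color-$j$ cluster is involved as a ``boundary node'' for at most its own neighbors' clusters. One organizes the computation as a recursion on the range $\{1,\dots,c\}$ of colors: split it into two halves, recurse on the lower half to finalize all those clusters, then recurse on the upper half where each cluster, when it processes, reads boundary outputs from already-finalized lower clusters. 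Each node participates actively only in the recursion branches corresponding to colors of clusters it touches, and because the recursion tree has depth $O(\log c)$ and a node is ``active'' on at most $O(1)$ root-to-leaf paths worth of levels (it has one color of its own, plus it serves as boundary for neighbors, but those boundary contributions can be pre-collected once), the total awake cost telescopes to $O(\log c)$. The main obstacle — and the step I would spend the most care on — is exactly this accounting: showing that the divide-and-conquer schedule lets every node sleep through all recursion branches except an $O(1)$-depth-sum worth of them, so that the per-level $O(1)$ awake cost sums to $O(\log c)$ rather than $O(c)$ or $O(c/\text{something})$; getting the boundary-information exchange to happen at the right recursion level, without a node needing to wake up once per neighbor, is the delicate point.
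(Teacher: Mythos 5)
Your high-level architecture (treat each cluster as a super-node, converge-cast its topology and boundary information to the root, let the root run the sequential greedy in order of decreasing depth, broadcast back) matches the paper's, which formalizes this by defining a problem $\Pi'$ on the virtual graph $H$ and proving $\Pi'\in\olocal$. However, there is a genuine gap exactly where you say the real work lies: the $O(\log c)$ awake schedule. Your divide-and-conquer sketch does not establish it, and the one concrete accounting claim you make --- that a node's ``boundary contributions can be pre-collected once'' --- is false: the outputs of the lower-colored neighboring clusters do not exist until those clusters have decided, which happens at many different times across the recursion, so a node must be awake \emph{simultaneously} with each relevant neighbor at some round that lies strictly after that neighbor's decision round and strictly before its own. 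Arranging this for all ordered pairs of colors while keeping every node awake only $O(\log c)$ times is precisely the content of \Cref{lem:mapping-and-function}: colors are placed at the leaves of a binary tree whose nodes are labeled by an in-order traversal, a cluster of color $\gamma$ is awake exactly at the rounds on the root-to-leaf path $r(\gamma)$ (receiving before its decision round $\phi(\gamma)$, deciding at $\phi(\gamma)$, sending after), and the lowest-common-ancestor property guarantees the required common round for every pair. Without this gadget (or an equivalent one), your schedule either costs $O(c)$ awake rounds or fails to deliver the boundary outputs in time; the paper simply invokes \Cref{lem:solve-olocal-given-coloring} on $\Pi'$ with $\gamma$ as the proper coloring of $H$ and then simulates via \Cref{lem:simulation-in-virtual-graph}.

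Two further points. First, your correctness argument only has each cluster read the \emph{outputs of its immediate neighbors} in lower-colored clusters, but the definition of \olocal allows the output of $v$ to depend on the outputs of \emph{all} of $G_\mu(v)\smallsetminus\{v\}$, which may span a long directed chain of clusters; the information must therefore be propagated transitively (each cluster forwards its accumulated state, not just its own output), and the schedule must support this --- the in-order-tree schedule does, by induction on colors. Second, your stated orientation (edges point from the lower-color cluster to the higher-color one) is the reverse of what your own argument uses: for $G_\mu(v)\smallsetminus\{v\}$ to consist of already-finalized lower-color nodes, outgoing edges must point toward \emph{lower} colors, as in the paper. This is a fixable slip, but as written the orientation makes the greedy order invalid.
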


Note that the complexity of the algorithm in the theorem above depends on $\max_{v\in V}\gamma(v)$ but not on~$\max_{v\in V}\delta(v)$. The proof of the theorem is based on a result of \cite{barenboimM21} stating that every problem in \olocal has awake complexity $O(\log \Delta + \log^* n)$ in the class of graphs with at most $n$ nodes, and maximum degree at most~$\Delta$. Let us first briefly provide the reader with a high-level intuition of this result, as it might be useful to the reader for a better understanding of the main ideas allowing us to establish Theorem~\ref{thm:solving-given-nd}. The algorithm from \cite{barenboimM21} works as follows. First, the nodes compute an $O(\Delta^2)$-coloring of the graph, in $O(\log^* n)$ rounds, by using Linial's coloring algorithm~\cite{linial92}. Let $q = O(\Delta^2)$ be the smallest power of $2$ larger than the number of colors produced by Linial's algorithm. Second, the colors in $\{1,\ldots, q\}$ are mapped to the larger color palette $\{1,\ldots,2q-1\}$. This is done by using a carefully crafted mapping $\phi:\{1,\ldots, q\}\to \{1,\ldots,2q-1\}$ satisfying the following properties.

\begin{lemma}[Barenboim and Maimon \cite{barenboimM21}]\label{lem:mapping-and-function}
    There exist two mappings 
    \[
    \phi : \{1,\ldots,q\} \rightarrow \{1,\ldots,2q-1\}, \text{ and }
    r : \{1,\ldots,q\} \rightarrow 2^{\{1,\ldots,2q-1\}}
    \]
    satisfying the following: (1)~for every $c \in \{1,\ldots,q\}$, $| r(c) | = 1 + \log q$, (2)~$\phi(c) \in r(c)$, and (3)~for every pair of distinct colors $(c_1,c_2) \in \{1,\ldots,q\}^2$, there exists $x \in r(c_1) \cap r(c_2)$ such that $$\min\{\phi(c_1),\phi(c_2)\} < x < \max\{\phi(c_1),\phi(c_2)\}.$$
\end{lemma}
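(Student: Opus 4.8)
The goal is to construct a pair of mappings $\phi$ and $r$ satisfying the three listed properties. I would approach this through a binary-tree encoding of the color palette. Specifically, think of the colors $\{1,\dots,q\}$ (where $q$ is a power of $2$) as the leaves of a complete binary tree of depth $\log q$, and think of the target palette $\{1,\dots,2q-1\}$ as the set of all $2q-1$ nodes of this tree (which is exactly the number of internal nodes plus leaves of a complete binary tree with $q$ leaves), laid out in the in-order (symmetric) traversal order so that the numerical order on $\{1,\dots,2q-1\}$ coincides with the left-to-right order of the tree nodes in an in-order walk.

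With this picture in mind, the plan is as follows. First I would fix the in-order numbering $\mathrm{pos}(\cdot)$ of the tree nodes by $\{1,\dots,2q-1\}$; in an in-order traversal the left subtree of a node occupies strictly smaller positions and the right subtree strictly larger positions than the node itself. Then, for a color $c$, let $r(c)$ be the set of positions of the $1+\log q$ nodes lying on the root-to-leaf path ending at leaf $c$ (the root, and then the $\log q$ nodes down to the leaf), which immediately gives property~(1), $|r(c)| = 1+\log q$. Define $\phi(c)$ to be the position of the leaf $c$ itself, i.e., $\phi(c)$ is the in-order rank of leaf $c$ among all $2q-1$ nodes; since the leaf lies on its own root-to-leaf path, property~(2), $\phi(c)\in r(c)$, holds by construction. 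For property~(3), given two distinct colors $c_1\neq c_2$, their root-to-leaf paths share a common prefix and then diverge at some internal node $w$ — the lowest common ancestor of leaves $c_1$ and $c_2$. Both paths contain $w$, so $\mathrm{pos}(w)\in r(c_1)\cap r(c_2)$. Moreover, without loss of generality $c_1$ lies in the left subtree of $w$ and $c_2$ in the right subtree; by the in-order property, the position of leaf $c_1$ is strictly less than $\mathrm{pos}(w)$, and the position of leaf $c_2$ is strictly greater than $\mathrm{pos}(w)$. Hence $\phi(c_1) < \mathrm{pos}(w) < \phi(c_2)$, so $x=\mathrm{pos}(w)$ witnesses property~(3).

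The main things to verify carefully are: (a) that the in-order numbering genuinely has the separation property I am invoking — namely that every node's position strictly exceeds all positions in its left subtree and is strictly below all positions in its right subtree — which is the defining recursive property of in-order traversal and can be checked by induction on subtree depth; and (b) the arithmetic that a complete binary tree with $q$ leaves has exactly $2q-1$ nodes, so that $\{1,\dots,2q-1\}$ is precisely the set of in-order ranks. Both are routine. One should also double check the edge case of the smallest palette ($q=1$ or the relevant base case in the application), though the construction degenerates gracefully there. I expect the only genuinely delicate point to be presenting the in-order/LCA argument cleanly — specifically making explicit that the two leaves land on opposite sides of their lowest common ancestor and that this forces $\phi(c_1)$ and $\phi(c_2)$ to straddle $\mathrm{pos}(w)$ — rather than any real technical obstacle; the whole lemma is essentially a clean combinatorial encoding.

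One alternative, more explicit route avoids trees altogether: index colors by their binary representations $b_1b_2\cdots b_m$ with $m=\log q$, and define $\phi$ and $r$ directly via arithmetic formulas on these bit strings so that $r(c)$ is the set of $m+1$ ``prefix codes'' of $c$ and $\phi(c)$ is the code of the full string; the straddling property then becomes a statement about where two strings first differ. This is equivalent to the tree construction but trades a clean picture for formula-pushing, so I would present the tree version as the primary argument and perhaps remark that an explicit formula can be extracted from it.
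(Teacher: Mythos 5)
Your construction is exactly the one the paper uses: a complete binary tree on $2q-1$ nodes labeled by in-order traversal, with $\phi(c)$ the label of the $c$-th leaf, $r(c)$ the labels on the root-to-leaf path, and property~(3) witnessed by the lowest common ancestor via the in-order separation property. The proposal is correct and takes essentially the same approach as the paper.
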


\Cref{lem:mapping-and-function} is proved by considering a binary tree $T$ whose nodes are the elements of the set $\{1,\ldots,2q-1\}$ labeled according to a in-order traversal  (see Figure~\ref{fig:tree}). The leaves of $T$ are the images by~$\phi$ of the elements in  $\{1,\ldots,q\}$, i.e., $\phi(c)$ is the label of the leaf with the $c$-th smallest label among all leaves. For every color~$c\in \{1,\ldots,q\}$, $r(c)$ is the set of labels of the nodes in the path from the root of $T$ to the leaf~$\phi(c)$.

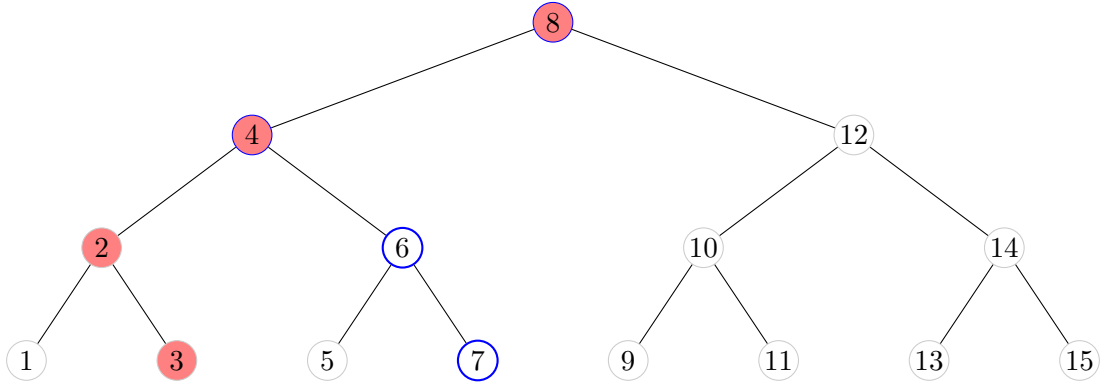
\begin{figure}
\begin{center}
\begin{tikzpicture}[square/.style={regular polygon,regular polygon sides=4}]
   \tikzstyle{circlenode}=[draw,circle,minimum size=70pt,inner sep=0pt]
    \tikzstyle{whitenode}=[draw=black!20,circle=black!20,fill=white,minimum size=15pt,inner sep=0pt]
    \tikzstyle{bignode}=[draw,circle,fill=white,minimum size=15pt,inner sep=0pt]
    \tikzstyle{squarenode}=[draw,square,fill=white,minimum size=15pt,inner sep=0pt]
    \tikzstyle{rednode}=[draw=black!20,circle=black!20,fill=red!50,minimum size=15pt,inner sep=0pt]
    \tikzstyle{bluenode}=[draw=blue,circle=blue, thick,fill=white,minimum size=15pt,inner sep=0pt]
    \tikzstyle{bluerednode}=[draw=blue,circle=blue,fill=red!50,minimum size=15pt,inner sep=0pt]
    \tikzstyle{fullrednode}=[draw=red,circle=red,fill=red,minimum size=15pt,inner sep=0pt]

\draw (0,0) node[bluerednode] (a8) []{8};
\draw (-4,-1.5) node[bluerednode] (a4) []{4};
\draw (4,-1.5) node[whitenode] (a12) []{12};
\draw (-6,-3) node[rednode] (a2) []{2};
\draw (-2,-3) node[bluenode] (a6) []{6};
\draw (2,-3) node[whitenode] (a10) []{10};
\draw (6,-3) node[whitenode] (a14) []{14};
\draw (-7,-4.5) node[whitenode] (a1) []{1};
\draw (-5,-4.5) node[rednode] (a3) []{3};
\draw (-3,-4.5) node[whitenode] (a5) []{5};
\draw (-1,-4.5) node[bluenode] (a7) []{7};
\draw (1,-4.5) node[whitenode] (a9) []{9};
\draw (3,-4.5) node[whitenode] (a11) []{11};
\draw (5,-4.5) node[whitenode] (a13) []{13};
\draw (7,-4.5) node[whitenode] (a15) []{15};

	\draw (a8) edge [] node {} (a4);
	\draw (a8) edge [] node {} (a12);
	\draw (a2) edge [] node {} (a4);
	\draw (a6) edge [] node {} (a4);
	\draw (a10) edge [] node {} (a12);
	\draw (a14) edge [] node {} (a12);
	\draw (a2) edge [] node {} (a1);
	\draw (a6) edge [] node {} (a5);
	\draw (a10) edge [] node {} (a9);
	\draw (a14) edge [] node {} (a13);
	\draw (a2) edge [] node {} (a3);
	\draw (a6) edge [] node {} (a7);
	\draw (a10) edge [] node {} (a11);
	\draw (a14) edge [] node {} (a15);

\end{tikzpicture}
\caption{The tree used in the proof of Lemma~\ref{lem:mapping-and-function}. We have $\phi(2)=3$ as the second smallest label of a leaf is~3, and $r(2)=\{2,3,4,8\}$. Similarly, $\phi(4)=7$, and $r(4)=\{4,6,7,8\}$.
Note that the lowest common ancestor of the nodes labeled 3 and 7 is the node labeled~4, and indeed $3<4<7$.}
\label{fig:tree}
\end{center}
\end{figure}

Given the computed $q$-coloring, and given the mapping $\phi$ and $r$, the proof by \cite{barenboimM21} that every problem in \olocal has awake complexity $O(\log \Delta + \log^* n)$ continues as follows. 
Let $G=(V,E)$ be a graph, and let $\mu$ be the orientation of the edges of $G$ resulting from the proper $q$-coloring of~$G$, that is, every edge is oriented from its end-point of higher color to its other end-point.
For every color $c \in\{1,\dots,q\}$, let us consider the two sets
\[
r_{<}(c) = \{ x \in r(c) \mid x < \phi(c) \},
\]
and
\[
r_{>}(c) = \{ x \in r(c) \mid  x > \phi(c) \}.
\]
Every node $v\in V$ with color $c=c(v)$ wakes up at every round $i \in r(c)$, and performs the following operations (recall that $\phi(c)\in r(c)$):
\begin{itemize}
    \item If $i \in  r_{<}(c)$, then node $v$ receives the states of all its neighbors awaken at round~$i$, and stores this information in its own state;
    \item If $i = \phi(c)$, then node $v$ decides its own solution as a function of its  state (which possibly includes the states of some neighbors received during previous rounds);
    \item If $i \in r_{>}(c)$, then node $v$ sends its state to its neighbors.
\end{itemize}
The awake complexity of this algorithm is $O(\log q)$ as, from \Cref{lem:mapping-and-function}, $| r(c) | = 1 + \log(q)$. 

To establish that the algorithm correctly produces a solution for the problem $\Pi$, it is sufficient to show that each node $v$ knows the whole graph~$G_\mu(v)$ when it must compute its local solution to~$\Pi$. This can be shown by a simple induction on the colors in $\{1,\dots,q\}$. Indeed, this clearly holds for nodes of color~$1$.
Now, let us assume that each neighbor $u$ of $v$ with color $c(u)<c(v)$ knows $G_\mu(u)$ at round $\phi(c(u))$.   
By \Cref{lem:mapping-and-function}, there exists an element $x \in r(c(u)) \cap r(c(v))$ such that $\phi(c(u)) < x < \phi(c(v))$. It follows that, at round~$x$, node $v$ learns $G_\mu(u)$, and thus,  node $v$ knows $G_\mu(v)$ at round $\phi(c(v))$.
We summarize the result of \cite{barenboimM21} by the following lemma for further references in the text.

\begin{lemma}[Barenboim and Maimon \cite{barenboimM21}]
\label{lem:solve-olocal-given-coloring}
    For every problem $\Pi \in \olocal$, there exists a distributed algorithm that solves $\Pi$ in every graph $G=(V,E)$ provided with a proper $k$-coloring of its nodes,   running in $O(k)$ rounds, and with awake complexity $O(\log k)$.
\end{lemma}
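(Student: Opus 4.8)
The plan is to realize the three-phase scheme sketched above as a self-contained algorithm and to verify its two complexity bounds together with correctness. First I would reduce to the case where the number of colors is a power of two: given the proper $k$-coloring, let $q$ be the smallest power of $2$ with $q\geq k$, so that the very same coloring is also a proper $q$-coloring while $q<2k$, hence $q=O(k)$ and $\log q=O(\log k)$. This padding is purely local and costs no rounds. I would then invoke \Cref{lem:mapping-and-function} to obtain the mappings $\phi$ and $r$ on $\{1,\dots,q\}$, and orient every edge of $G$ from its higher-colored endpoint towards its lower-colored endpoint, producing an acyclic orientation $\mu$ (acyclicity is immediate, since colors strictly decrease along every directed edge). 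Note that $\phi$ is monotone, so $c(u)<c(v)$ entails $\phi(c(u))<\phi(c(v))$.

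The algorithm assigns to each node $v$ of color $c=c(v)$ the wake-up set $r(c)$, partitioned into $r_{<}(c)=\{x\in r(c)\mid x<\phi(c)\}$, the singleton $\{\phi(c)\}$, and $r_{>}(c)=\{x\in r(c)\mid x>\phi(c)\}$. At a round $i\in r_{<}(c)$ node $v$ listens and folds the states received from its awake neighbors into its own state; at round $\phi(c)$ it finalizes its output; at a round $i\in r_{>}(c)$ it transmits its current state. For the complexity bounds I would argue directly: every node $v$ is awake only at the rounds of $r(c(v))$, and $|r(c)|=1+\log q=O(\log k)$ by property~(1) of \Cref{lem:mapping-and-function}, giving awake complexity $O(\log k)$; since all wake-up rounds lie in $\{1,\dots,2q-1\}$, the round complexity is $O(q)=O(k)$.

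The heart of the proof is the correctness invariant: by the round $\phi(c(v))$ at which $v$ decides, $v$ has learned the entire subgraph $G_\mu(v)$ reachable from it along outgoing edges. I would prove this by induction on the color value. In the base case, a node of color $1$ has out-degree $0$ in $G_\mu$, so $G_\mu(v)=\{v\}$ and the invariant holds trivially. For the inductive step, consider an out-neighbor $u$ of $v$, so $c(u)<c(v)$ and, by induction, $u$ knows $G_\mu(u)$ by round $\phi(c(u))$. Property~(3) of \Cref{lem:mapping-and-function} furnishes a round $x\in r(c(u))\cap r(c(v))$ with $\phi(c(u))<x<\phi(c(v))$; at this round $u$ is awake and, since $x>\phi(c(u))$, is in its sending phase (its state already contains $G_\mu(u)$ and is retained), while $v$ is awake and, since $x<\phi(c(v))$, is in its listening phase, so $v$ receives $G_\mu(u)$. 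As $G_\mu(v)=\{v\}\cup\bigcup_u G_\mu(u)$ over all out-neighbors $u$, node $v$ has assembled all of $G_\mu(v)$ before round $\phi(c(v))$. Finally, since $G_\mu(w)\subseteq G_\mu(v)$ for every $w\in G_\mu(v)$, knowledge of $G_\mu(v)$ lets $v$ locally replay the sequential greedy \olocal algorithm on $G_\mu(v)$ and compute its own output; determinism of that algorithm, as a function of the reachable subgraph and the node identifiers, guarantees that the outputs produced at distinct nodes are mutually consistent, hence form a valid global solution to $\Pi$.

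The main obstacle is precisely the synchronization buried in the inductive step: the whole construction hinges on there being, for every directed edge, a round at which both endpoints are simultaneously awake, with the head already past its own decision and the tail not yet at its own decision, so that freshly finalized knowledge flows strictly ``downhill'' in color exactly once per edge within the logarithmic awake budget. This is exactly the content of property~(3) of \Cref{lem:mapping-and-function}; consequently the delicate part of the write-up is to line up the send/receive semantics of the $r_{>}$ and $r_{<}$ phases with that property, rather than any separate combinatorial argument.
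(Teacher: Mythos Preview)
Your proposal is correct and mirrors the paper's argument essentially step for step: padding $k$ to a power of two $q$, invoking \Cref{lem:mapping-and-function}, partitioning $r(c)$ into listen/decide/send phases, and proving by induction on the color that each node knows $G_\mu(v)$ by round $\phi(c(v))$ via property~(3). The only addition you make explicit that the paper leaves implicit is the monotonicity of $\phi$, which is indeed needed to conclude $\phi(c(u))<\phi(c(v))$ from $c(u)<c(v)$ when applying property~(3).
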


\Cref{thm:solving-given-nd} essentially states that \Cref{lem:solve-olocal-given-coloring} also holds whenever, instead of a proper coloring of the graph, the algorithm is given a  colored  \bfs-clustering of the graph. 

\begin{proof}[Proof of \Cref{thm:solving-given-nd}]
    Let $G = (V,E)$ be an $n$-node graph and let $\Pi\in\olocal$. Let $(\gamma,\delta)$ be a  colored  \bfs-clustering of~$G$, and let us assume that every node $v\in V$ initially knows its color $\gamma(v)$, and its distance~$\delta(v)$. Let $H$ be the virtual graph induced by the  colored  \bfs-clustering $(\gamma,\delta)$. For every vertex $i$ of $H$, the nodes of each cluster~$C$ labeled~$i$ use the broadcast algorithm of \Cref{lem:broad-converge-cast} to learn the ID of the root of~$G_i$. (For doing so, every node $v\in C$ sets its parent $p(v)$ as one of its neighbors $u\in C$ with $\delta(u)<\delta(v)$.) After this phase, $H$ can then indifferently be viewed as either the virtual graph induced by the given  colored  \bfs-clustering $(\gamma,\delta)$ of $G$, or as the virtual graph induced by the  uniquely-labeled  \bfs-clustering $(\ell,\delta)$ of $G$ where $\ell$ is the labeling of the clusters of $(\gamma,\delta)$ defined by the IDs of the roots of these clusters. 

    Let us now define a problem $\Pi' \in \olocal$ with the property  that any solution for $\Pi'$ in the virtual graph $H$ can be converted into a solution for $\Pi$ in~$G$.
    The problem $\Pi'$ on a graph $H$ is defined as follows. The input for $\Pi'$ is defined w.r.t.\ a graph $G$ satisfying that $H$ is the virtual graph induced by some  uniquely-labeled  \bfs-clustering $(\ell,\delta)$ of~$G$.
    For each vertex $i$ of~$H$, the input of $i$ is the subgraph $G_i$ of~$G$ (including the node IDs, and the node inputs, if any) induced by the nodes of $G$ belonging to the cluster~$i$, plus the edges incident to at least one node of $G_i$ and going out of~$G_i$. Solving $\Pi'$ on $H$ requires each vertex $i$ of $H$ to output a set of individual solutions for $\Pi$ in $G$ potentially produced at all the nodes of~$G_i$ such that the union of all these sets, taken over all vertices of~$H$, is a valid solution for~$\Pi$ in~$G$. By construction, a solution for $\Pi'$ in $H$ directly provides a solution for $\Pi$ in $G$.
       
    \begin{claim}\label{claim:Pi-prime-in-OLOCAL}
        $\Pi' \in \olocal$.
    \end{claim}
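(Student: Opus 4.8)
The plan is to verify that $\Pi'$ satisfies the definition of \olocal: for every virtual graph $H$ and every acyclic orientation $\nu$ of its edges, there is a greedy procedure that, respecting $\nu$, computes each vertex's output for $\Pi'$ using only the outputs previously computed at the vertices of $H$ reachable from it. I would argue by reducing a greedy execution for $\Pi'$ on $(H,\nu)$ to a greedy execution for $\Pi$ on $(G,\mu)$ for a suitably chosen acyclic orientation $\mu$ of $G$.

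First I would construct $\mu$ from $\nu$. Recall $H$ is the virtual graph of a uniquely-labeled \bfs-clustering $(\ell,\delta)$ of $G$. Orient every edge of $G$ as follows. For an edge $\{u,v\}\in E$ with $\ell(u)\neq\ell(v)$ (an inter-cluster edge), orient it according to $\nu$ on the corresponding edge of $H$, i.e. from the endpoint in cluster $i$ to the endpoint in cluster $j$ whenever $\nu$ orients $\{i,j\}$ from $i$ to $j$. For an edge inside a cluster $i$ (i.e. $\ell(u)=\ell(v)=i$), orient it away from the root of $G_i$, i.e. from the endpoint of smaller $\delta$-value to the one of larger $\delta$-value (these differ by exactly one along a BFS tree edge, and in general along any intra-cluster edge the two $\delta$-values differ; break the remaining ambiguity arbitrarily, say toward the lower ID — the only thing that matters is acyclicity within the cluster, which is guaranteed since we can orient by any linear extension of the BFS layering). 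I must then check $\mu$ is acyclic: a cycle in $G_\mu$ using only intra-cluster edges is impossible because intra-cluster edges point toward larger $\delta$ (or toward lower ID within a layer, also acyclic); a cycle using at least one inter-cluster edge would project to a closed walk in $H$ that uses at least one edge, all oriented consistently with $\nu$, contradicting acyclicity of $\nu$. Hence $G_\mu$ is a DAG.

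Next I would show the reachability structure matches. The key observation is that if $i$ can reach $j$ in $H_\nu$, and $v$ is any node of cluster $j$, then every node of cluster $i$ can reach $v$ in $G_\mu$: go from a node of cluster $i$ up to its root using... actually simpler: from any node of cluster $i$, follow an inter-cluster edge realizing the first step $i\to i'$ of a path in $H_\nu$ (such an edge exists by definition of $H$ and of $\mu$), landing in some node of $i'$; then within cluster $i'$ one can reach any node, but we only need to reach a node incident to the next inter-cluster edge, and since $G_{i'}$ is connected and intra-cluster orientation follows a BFS layering rooted at the root, one can reach... here I need that from an arbitrary node of $G_{i'}$ one can reach an arbitrary other node following $\mu$-oriented intra-cluster edges, which is false in general for a layered orientation. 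The fix: observe that the greedy algorithm for $\Pi'$ only needs, at vertex $i$, the outputs at all vertices reachable from $i$ in $H_\nu$; correspondingly, to compute $\Pi$ greedily on $G_\mu$ it suffices to process clusters in reverse-topological order of $H_\nu$, and within each cluster process its nodes in reverse-BFS order (deepest first). So rather than matching reachability node-by-node, I would directly exhibit the greedy algorithm for $\Pi'$: on input $(H,\nu)$, vertex $i$ — once all vertices reachable from $i$ have output — simulates the $\Pi$-greedy algorithm (which exists since $\Pi\in\olocal$) on the DAG $G_\mu$ restricted to $G_{\mu}(v)$ for each $v\in G_i$, processing the nodes of $G_i$ in order of decreasing $\delta(v)$. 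For a node $v\in G_i$ of maximal depth, $G_\mu(v)\setminus\{v\}$ consists only of nodes in clusters strictly reachable from $i$ (it contains no other node of $G_i$, since intra-cluster edges point to strictly larger $\delta$ or, within a layer, we can make sure a maximal-depth node has no outgoing intra-cluster edge — again arrange intra-layer ties to respect a linear order so the globally-last node of $G_i$ is a sink within $G_i$; proceed inductively peeling sinks), hence its $\Pi$-output is determined by data already available to vertex $i$ (its own input $G_i$ gives the intra-cluster edges, and the outputs handed down from reachable clusters give the rest). Proceeding down the $\delta$-layers, vertex $i$ computes $\Pi$-outputs for all of $G_i$, using only its own input and the outputs of vertices reachable from it in $H_\nu$; this is exactly a valid greedy step for $\Pi'$, and the union over all $i$ reproduces the $\Pi$-greedy solution on $G_\mu$, which is correct because $\Pi\in\olocal$.

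The main obstacle is the bookkeeping around intra-cluster edges: one must be careful that the orientation $\mu$ inside each cluster is acyclic and, more importantly, that processing cluster $i$'s nodes in the right internal order never requires a $\Pi$-output from another node of $G_i$ that hasn't been computed yet — i.e. that $G_\mu$ restricted to $G_i$ is itself a DAG whose reverse-topological order is available to vertex $i$ from its own input alone. This is fine because vertex $i$'s input for $\Pi'$ is defined to include all of $G_i$ with IDs and inputs plus its outgoing edges, so vertex $i$ can internally reconstruct the orientation $\mu$ on $G_i$ and on the cut edges, and run the sequential $\Pi$-greedy locally. I would therefore spend the bulk of the write-up (i) fixing the intra-cluster orientation precisely (via a linear extension of the BFS layering) and checking acyclicity of $\mu$, and (ii) stating cleanly the invariant that after vertex $i$ finishes, it holds correct $\Pi$-outputs for all nodes of $G_i$, consistent with $G_\mu(v)$ for each such $v$ — from which $\Pi'\in\olocal$ is immediate.
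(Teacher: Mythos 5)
Your proposal is correct and follows essentially the same route as the paper: orient inter-cluster edges of $G$ according to the given acyclic orientation of $H$, orient intra-cluster edges by increasing $\delta$, and let vertex $i$ run the sequential greedy for $\Pi$ on $G_i$ in order of decreasing depth, using its own input (which contains all of $G_i$) and the $\Pi$-outputs already produced by the reachable clusters. Your extra care about intra-layer tie-breaking and the acyclicity of the combined orientation is a reasonable tightening of details the paper leaves implicit, but it is not a different argument.
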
 
    
    For establishing the claim, by the definition of \olocal, it is sufficient to prove that, assuming a given an acyclic orientation $\mu_H$ of the edges of $H$, then, for every vertex $i$ of $H$, if a solution for $\Pi'$ has already been computed at every vertex of $H_{\mu_H}(i) \smallsetminus \{i\}$, then one can compute the solution for vertex $i$ of $H$. For this purpose, we define an acyclic orientation $\mu_G$ of the edges of $G$ as follows. Any edge $\{u,v\}$ connecting a node $u$ of a cluster $G_i$ of the  uniquely-labeled  \bfs-clustering $(\ell,\delta)$ and a node $v$ of a cluster $G_j$ is oriented according to the orientation in $\mu_H$ of the edge $e=\{i,j\}$ of~$H$, i.e., from $u$ to $v$ if $e$ is oriented from $i$ to $j$ in $\mu_H$, or from $v$ to $u$ otherwise. Any edge $\{u,v\}$ connecting two nodes in a same cluster $G_i$ of $(\ell,\delta)$ is oriented in $\mu_G$ using $\delta$, i.e., from $u$ to $v$ if $\delta(u)<\delta(v)$, and from $v$ to $u$ otherwise. Let $i$ be a vertex of $H$, and let us assume a solution for $\Pi'$ has already been computed at every vertex of $H_{\mu_H}(i) \smallsetminus \{i\}$. Since every inter-cluster edge of $G$ is oriented in $\mu_G$ the same as the associated edge of $H$ is oriented in~$\mu_H$, this assumption implies in particular that a solution for $\Pi$ has been computed for all nodes of $G$ in the clusters $G_j$ for all vertices $j$ in $H_{\mu_H}(i) \smallsetminus \{i\}$. To compute a solution for~$i$, we go over all the nodes in $G_i$ in the order imposed by $\mu_G$, i.e., in the order imposed by~$\delta$.  This enables to compute a solution of $\Pi$ for all nodes in~$G_i$. By definition of $\Pi'$, this results in a solution of $\Pi'$ for vertex~$i$. Therefore $\Pi' \in \olocal$, which completes the proof of Claim~\ref{claim:Pi-prime-in-OLOCAL}.

\bigbreak 

    Thanks to Claim~\ref{claim:Pi-prime-in-OLOCAL}, we can apply \Cref{lem:solve-olocal-given-coloring} on~$\Pi'$. We do so by using the proper coloring~$\gamma$ of~$H$, and \Cref{lem:solve-olocal-given-coloring} says that $\Pi'$ can be solved in $H$ by a distributed algorithm $A$ with awake complexity $O(\log c)$, and round complexity~$O(c)$. We use \Cref{lem:simulation-in-virtual-graph} for simulating $A$ on~$G$ using the uniquely-labeled \bfs-clustering $(\ell,\delta)$.
     Each round of $A$ in $H$ is a virtual round in~$G$. Each virtual round is replaced by a sequence of $O(n)$ rounds used for (1)~gathering at the root of each cluster $G_i$ of $(\ell,\delta)$ the set of all messages exchanged between $G_i$ and its neighboring clusters during the previous virtual round, and (2)~broadcasting this set to all nodes in~$G_i$. The total number of rounds consumed for the simulation of $A$ in $G$ is therefore $O(nc)$. Each vertex $i$ of $H$ is awake $O(\log c)$ times during the execution of~$A$, and, whenever $i$ is awake, each node of $G_i$ is awake a constant number of rounds for performing the convergecast-broadcast operation. It follows that the awake complexity of the simulation of $A$ in $G$ remains constant. This completes the proof of \Cref{thm:solving-given-nd}.
\end{proof}

\section{Computing a Colored \bfs-Clustering}

In the previous section, we have seen how to solve any problem in \olocal assuming given a  colored  \bfs-clustering. In this section, we show how to compute such a colored \bfs-clustering efficiently in the \sleeping model. Specifically, this section is entirely dedicated to establishing the following theorem. Note that \Cref{thm:olocal-result}, which is the main contribution of the paper, directly follows from combining \Cref{thm:nd} with \Cref{thm:solving-given-nd}. 

\begin{theorem}\label{thm:nd}
    Let $G= (V,E)$ be an $n$-node graph. A  colored  \bfs-clustering $(\gamma,\delta)$ of $G$ with $\max_{v\in V} \gamma(v)= 2^{O(\sqrt{\log n})}$ can be computed by a distributed algorithm with awake complexity $O(\sqrt{\log n} \cdot \log^* n)$, and round complexity $O(n^5 \sqrt{\log n})$. 
\end{theorem}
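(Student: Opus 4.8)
I would obtain \Cref{thm:nd} by implementing, inside the \sleeping model, a Panconesi--Srinivasan-style network decomposition: the kind that builds a clustering whose virtual cluster graph admits a proper coloring with $2^{O(\sqrt{\log n})}$ colors, organized as $k=\Theta(\sqrt{\log n})$ successive \emph{phases}, where each phase shrinks the palette of the current cluster graph by a $2^{\Theta(\sqrt{\log n})}$ factor while enlarging cluster radii by at most a controlled amount. The whole point is that, using \Cref{lem:broad-converge-cast,lem:simulation-in-virtual-graph}, a single phase can be carried out with \emph{awake} complexity only $O(\log^\star n)$ (and round complexity polynomial in $n$). Summing over the $k$ phases then gives total awake complexity $k\cdot O(\log^\star n)=O(\sqrt{\log n}\cdot\log^\star n)$ and total round complexity $k$ times a fixed polynomial, which I would bound crudely by $O(n^5\sqrt{\log n})$ (the polynomial, and hence this bound, tightens when the identifier range is smaller, since several sub-steps pay $O(N)$ rounds for a label range of size $N$).

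\noindent\textbf{Invariant and why it is cheap in the \sleeping model.} Throughout, I would maintain, after phase $j$, a clustering $\mathcal{C}_j$ of $V$ such that: (i) every cluster carries a rooted spanning tree together with a monotone labeling $L$ (i.e.\ $L(v)>L(p(v))$ for every non-root $v$) with labels in a range of size $n^{O(1)}$; and (ii) the virtual cluster graph $H_j$ comes with a proper coloring using $N_j$ colors, where $N_0=n^{O(1)}$ and $N_j=N_{j-1}/2^{\Theta(\sqrt{\log n})}$, so that $N_k=2^{O(\sqrt{\log n})}$. Phase $0$ is the trivial clustering: singletons, with $L$ and color both equal to the node identifier (a proper coloring of $H_0=G$, since adjacent nodes have distinct IDs). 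Property (i) is exactly what makes the simulation cheap: by \Cref{lem:broad-converge-cast} any convergecast/broadcast inside a cluster costs $O(1)$ awake rounds and $n^{O(1)}$ rounds, and by \Cref{lem:simulation-in-virtual-graph} one round of any algorithm executed on $H_j$ can be run on $G$ with a constant blow-up in awake complexity.

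\noindent\textbf{One phase, and finalization.} A phase runs on $H_{j-1}$ and consists of (a) a \emph{recoloring/selection} sub-step that, using Linial's coloring algorithm together with a partition of the $N_{j-1}$ colors into $2^{\Theta(\sqrt{\log n})}$ blocks, decides in $O(\log^\star n)$ rounds of $H_{j-1}$ which clusters get merged and how; and (b) a \emph{merging} sub-step that forms the clusters of $\mathcal{C}_j$ as unions of old clusters joined through $G$-edges (``bridges''): the new spanning tree is the union of the old trees plus the bridges, and the new labeling is obtained by stacking the old label ranges (shifting each cluster's labels above the running maximum along the merge tree), which preserves monotonicity and keeps the range polynomial. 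Contracting each merged group yields $H_j$, and the Panconesi--Srinivasan analysis certifies that $H_j$ is properly colored with $N_{j-1}/2^{\Theta(\sqrt{\log n})}$ colors with controlled radius growth. After $k=\Theta(\sqrt{\log n})$ phases, the coloring of $H_k$ gives $\gamma$ with $2^{O(\sqrt{\log n})}$ colors, but $\delta$ still owes a genuine \bfs\ structure (true distance to a root inside each color-class component). I would repair this with one final round of per-cluster work: along the labeled spanning tree, convergecast the whole cluster topology to the root, let the root compute single-source distances, and broadcast each value back --- by \Cref{lem:broad-converge-cast} this is $O(1)$ awake rounds and $n^{O(1)}$ rounds. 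This produces the desired colored \bfs-clustering, and combining \Cref{thm:nd} with \Cref{thm:solving-given-nd} yields \Cref{thm:olocal-result}.

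\noindent\textbf{Main obstacle.} The crux is the design and analysis of a phase so that three requirements hold \emph{simultaneously}: (1) the palette of the cluster graph truly drops by a $2^{\Theta(\sqrt{\log n})}$ factor, so that $\Theta(\sqrt{\log n})$ phases suffice; (2) the merges performed are ``local enough'' in the current cluster graph that both the per-phase awake cost stays $O(\log^\star n)$ and the radius growth stays bounded --- in particular, the merging must avoid computing connected components of large-diameter subgraphs from scratch, which would be awake-expensive; and (3) each cluster keeps a rooted, monotonically labeled spanning tree of polynomial label range at all times, since that is precisely the structural property that lets \Cref{lem:broad-converge-cast,lem:simulation-in-virtual-graph} run without spending awake rounds. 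Reconciling (1) with (2) is the Panconesi--Srinivasan argument; carrying it out while also preserving (3), and phrasing every sub-step as a short \sleeping-model procedure, is the new and delicate part of the proof.
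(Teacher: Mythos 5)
Your skeleton matches the paper's: iterate $\Theta(\sqrt{\log n})$ phases, each implemented with awake complexity $O(\log^\star n)$ by running everything on the current virtual cluster graph via \Cref{lem:simulation-in-virtual-graph} and paying for intra-cluster communication with the monotone-label broadcast/convergecast of \Cref{lem:broad-converge-cast}; accumulate the round complexity to $O(n^5\sqrt{\log n})$; finish with one per-cluster convergecast--broadcast to fix $\delta$. The cost accounting and the role of the two lemmas are exactly right. But there is a genuine gap: the entire technical content of the proof is the design of a single phase, and you explicitly defer it (``reconciling (1) with (2) is the Panconesi--Srinivasan argument; carrying it out \ldots is the new and delicate part''). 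The one concrete mechanism you do propose --- partitioning the $N_{j-1}$ colors into $2^{\Theta(\sqrt{\log n})}$ blocks and merging accordingly --- does not work as stated: merging all clusters whose colors fall in a common block produces groups that need not be connected, and restoring connectivity the Panconesi--Srinivasan way (growing balls or computing ruling sets in the cluster graph over $2^{\Theta(\sqrt{\log n})}$ virtual rounds) is fatal here, because under \Cref{lem:simulation-in-virtual-graph} every virtual round costs $\Theta(1)$ awake rounds, so a single such phase would already exceed the whole awake budget.

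The paper's phase (\Cref{lem:onestep}) replaces ball-growing by a purely distance-$2$ construction: compute a distance-$2$ coloring with $O(n^4)$ colors by Linial, shift the colors of nodes of degree at most $b=2^{\sqrt{\log n}}$ above those of high-degree nodes, and have every non-minimal node point to the smallest-colored node within distance $2$. The resulting forest lives in $G^2$, so the crucial trick is the relabeling $c_2(v)=2c_1(p_1(v))+b(v)$ together with rerouting each distance-$2$ pointer through a common neighbor, which turns it into a spanning forest of $G$ whose labels strictly decrease toward the roots --- precisely the structure \Cref{lem:broad-converge-cast} needs, with \emph{no} bound on tree depth (your requirement of ``controlled radius growth'' is deliberately abandoned: deep clusters are free in awake complexity). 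Trees rooted at low-degree nodes consist only of low-degree nodes, so they are dissolved into singletons and properly colored with $O(b^2)$ colors, contributing $a b^2$ colors tagged by the phase index to the final palette; trees rooted at high-degree nodes survive, and since their roots are local minima of a distance-$2$ coloring they are pairwise at distance at least $3$, so each can be charged $b+1$ distinct nodes, leaving at most $n/b$ survivors. This ``finalize-or-shrink-by-$b$'' invariant (with the palette accumulated across phases rather than a proper coloring of $H_j$ being maintained and reduced, and with \Cref{lem:virtvirt-to-virt} flattening the cluster-of-clusters back onto $G$) is what makes $2\sqrt{\log n}$ phases suffice. Without some replacement for this mechanism, your proposal does not yet constitute a proof.
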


\paragraph{Remark.}

As it will appear clear in the proof, if the IDs assigned to the $n$ nodes are taken from $\{1,\ldots,n\}$, then the round complexity of our algorithm improves to $O(n^2 \sqrt{\log n})$. More generally,  if the IDs are taken from $\{1,\ldots,n^s\}$ with $1\leq s<4$, then the round complexity of our algorithm becomes $O(n^{1+s} \sqrt{\log n})$.

\medskip

The rest of the section is entirely dedicated to the proof of \Cref{thm:nd}. Before entering into the details of the proof, let us establish a couple of statements that will be useful for the proof. 

\begin{lemma}\label{lem:virtvirt-to-virt}
    Let $G = (V,E)$ be an $n$-node graph, let $H$ be the virtual graph induced by a  uniquely-labeled  \bfs-clustering $(\ell,\delta)$ of~$G$, and let $K$ be the virtual graph induced by a  uniquely-labeled  \bfs-clustering $(\ell',\delta')$ of~$H$. There exists a distributed algorithm with constant awake complexity, and round complexity $O(n^2)$ that computes a  uniquely-labeled  \bfs-clustering $(\ell'',\delta'')$ of~$G$ such that the virtual graph induced by $(\ell'',\delta'')$ is~$K$. That is, every node $v\in V$ is given the two pairs $(\ell(v),\delta(v))$ and $(\ell'(\ell(v)),\delta'(\ell(v)))$ as input, and it must compute the pair $(\ell''(v),\delta''(v))$.
\end{lemma}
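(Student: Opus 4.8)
The plan is to realize the labels $\ell''$ for free and then compute the distances $\delta''$ by gathering each final cluster at a designated root, computing in-cluster BFS distances there with unbounded local computation, and broadcasting the answers back; since every piece of communication is a convergecast or broadcast, \Cref{lem:broad-converge-cast} keeps the awake complexity constant, while \Cref{lem:concatenate} lets us run the constant number of phases one after the other. Set $\ell''(v):=\ell'(\ell(v))$, which each node already knows from its input. For $j$ in the range of $\ell''$ the corresponding final cluster is $\widehat{G}_j:=\bigcup_{i\,:\,\ell'(i)=j}G_i$, where $G_i$ is the cluster of $(\ell,\delta)$ labeled $i$; since the cluster of $H$ labeled $j$ is connected and each $G_i$ is connected, $\widehat{G}_j$ is connected, and the vertex set and adjacencies of the virtual graph induced by $(\ell'',\delta'')$ coincide with those of $K$ by construction, so it only remains to define $\delta''$ as required by \Cref{def:bfs-cluster} and to compute it. I fix the root of $\widehat{G}_j$ to be $u^\star_j$, the BFS-root of $G_{i^\star}$, where $i^\star$ is the BFS-root of the $H$-cluster labeled $j$; this node is characterized locally by $\delta=0$ and $\delta'\!\circ\ell=0$.

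First, in one round every node sends $(\mathrm{ID},\ell,\delta,\ell'\!\circ\ell,\delta'\!\circ\ell)$ to all neighbors; this lets each $v$ identify its parent $p_{B_i}(v)$ in the BFS tree $B_i$ of its $G$-cluster $G_i$ (the $G_i$-neighbor with $\delta$-value one smaller, ties broken by ID), for which $L(v):=\delta(v)+1$ is a monotone labeling of range $O(n)$. Using a convergecast along $B_i$ (\Cref{lem:broad-converge-cast}), the root $u_i$ of $G_i$ learns all of $G_i$ together with, for every boundary edge leaving $G_i$, the ID and the values $\ell,\ell'\!\circ\ell,\delta'\!\circ\ell$ of the outer endpoint. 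If $\delta'(i)>0$, node $u_i$ then selects the \emph{connecting edge} $\{a^\star_i,b^\star_i\}$ with $a^\star_i\in G_i$ and $b^\star_i\in G_{i'}$ minimizing $(\mathrm{ID}(b^\star_i),\mathrm{ID}(a^\star_i))$ among boundary edges with $\ell'(i')=\ell'(i)$ and $\delta'(i')=\delta'(i)-1$; such an edge exists because, in the connected cluster of $H$ labeled $j$, vertex $i$ has a neighbor one step closer to $i^\star$. If $\delta'(i)=0$, set $a^\star_i:=u_i$ with no connecting edge. Next, $u_i$ computes locally a spanning tree $T_i$ of $G_i$ rooted at $a^\star_i$, together with every node's depth $d_i(\cdot)$ in $T_i$, and broadcasts along $B_i$ to each $v\in G_i$ its $T_i$-parent, its depth $d_i(v)$, and (for $v=a^\star_i$) the ID of $b^\star_i$. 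One more neighbor-exchange round then lets every node learn the quantity $L''(w):=(n+1)\,\delta'(\ell(w))+d_{\ell(w)}(w)+1$ of each neighbor $w$.

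Let $\mathcal{T}$ be the spanning forest of $G$ whose restriction to each $\widehat{G}_j$ is the union of the trees $T_i$ over $i$ with $\ell'(i)=j$ plus the edges $\{a^\star_i,b^\star_i\}$ (with $b^\star_i$ declared the parent of $a^\star_i$); this is a spanning tree of each $\widehat{G}_j$ rooted at $u^\star_j$. A short check shows $L''$ is a monotone labeling of $\mathcal{T}$ of range $N=O(n^2)$: inside a cluster the parent's label drops by exactly $1$, and across a connecting edge $\delta'(\ell(\cdot))$ drops by $1$ so $L''$ drops by $(n+1)-d_{i'}(b^\star_i)\ge 2$; the maximum value is at most $(n+1)(n-1)+n$. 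Every node knows its $\mathcal{T}$-parent and that parent's $L''$-value (within a cluster it is $L''(v)-1$; for $v=a^\star_i$ it was obtained in the extra exchange round), and $N$ is a fixed polynomial in the known quantity $n$. Running a convergecast followed by a broadcast along $\mathcal{T}$ with the labeling $L''$ (\Cref{lem:broad-converge-cast}, applied independently to each tree of the forest) lets $u^\star_j$ gather all of $\widehat{G}_j$, set $\delta''(v)$ to the distance from $u^\star_j$ to $v$ in $\widehat{G}_j$ for every $v\in\widehat{G}_j$, and send $\delta''(v)$ back to $v$. Across all phases each node is awake only a constant number of times, and the round complexity is dominated by the $O(N)=O(n^2)$ rounds of the final convergecast/broadcast, which is what the statement claims.

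The main obstacle — and the reason $\delta''$ cannot simply be read off the two input pairs — is that the edge joining a cluster $G_i$ to its parent cluster in the BFS tree of $H$ need not be incident to the BFS-root $u_i$ of $G_i$. Hanging the within-cluster spanning tree from that arbitrary connecting node $a^\star_i$, so that the pieces glue into a genuine rooted spanning tree of $\widehat{G}_j$ carrying a low-range monotone labeling, is exactly what forces the extra gather-compute-broadcast step inside each $G_i$ rather than a direct reuse of the given BFS structure; once that tree is in hand, everything else is a routine application of \Cref{lem:broad-converge-cast}.
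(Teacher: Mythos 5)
Your proof is correct, and it follows the same overall plan as the paper's: set $\ell''=\ell'\circ\ell$, declare the root of a merged cluster to be the node with $\delta=0$ and $\delta'(\ell(\cdot))=0$, gather each $G_i$ at its root, select one connecting edge toward a parent cluster, and then let the merged cluster's root collect the whole induced subgraph so that $\delta''$ can be computed offline and broadcast back. The one place where you diverge is the second phase: the paper simulates a convergecast--broadcast protocol on the subgraph of $H$ induced by each $H$-cluster, replacing each of the $O(n)$ virtual rounds by $O(n)$ real rounds, whereas you flatten the two levels into a single explicit spanning tree $\mathcal{T}$ of each merged cluster (the trees $T_i$ re-rooted at the connecting nodes $a^\star_i$, glued by the connecting edges) equipped with the monotone labeling $L''(w)=(n+1)\delta'(\ell(w))+d_{\ell(w)}(w)+1$ of range $O(n^2)$, and then invoke \Cref{lem:broad-converge-cast} once. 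Your monotonicity check across connecting edges (the drop of $(n+1)-d_{i'}(b^\star_i)\ge 2$) is the right one, and both schedules land on the same $O(n^2)$ round bound and constant awake complexity; your version avoids the nested simulation at the cost of having to verify the glued tree and labeling explicitly, which you do.
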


\begin{figure}
\centering
  \begin{subfigure}[b]{0.9\textwidth}
    \includegraphics[width=\textwidth]{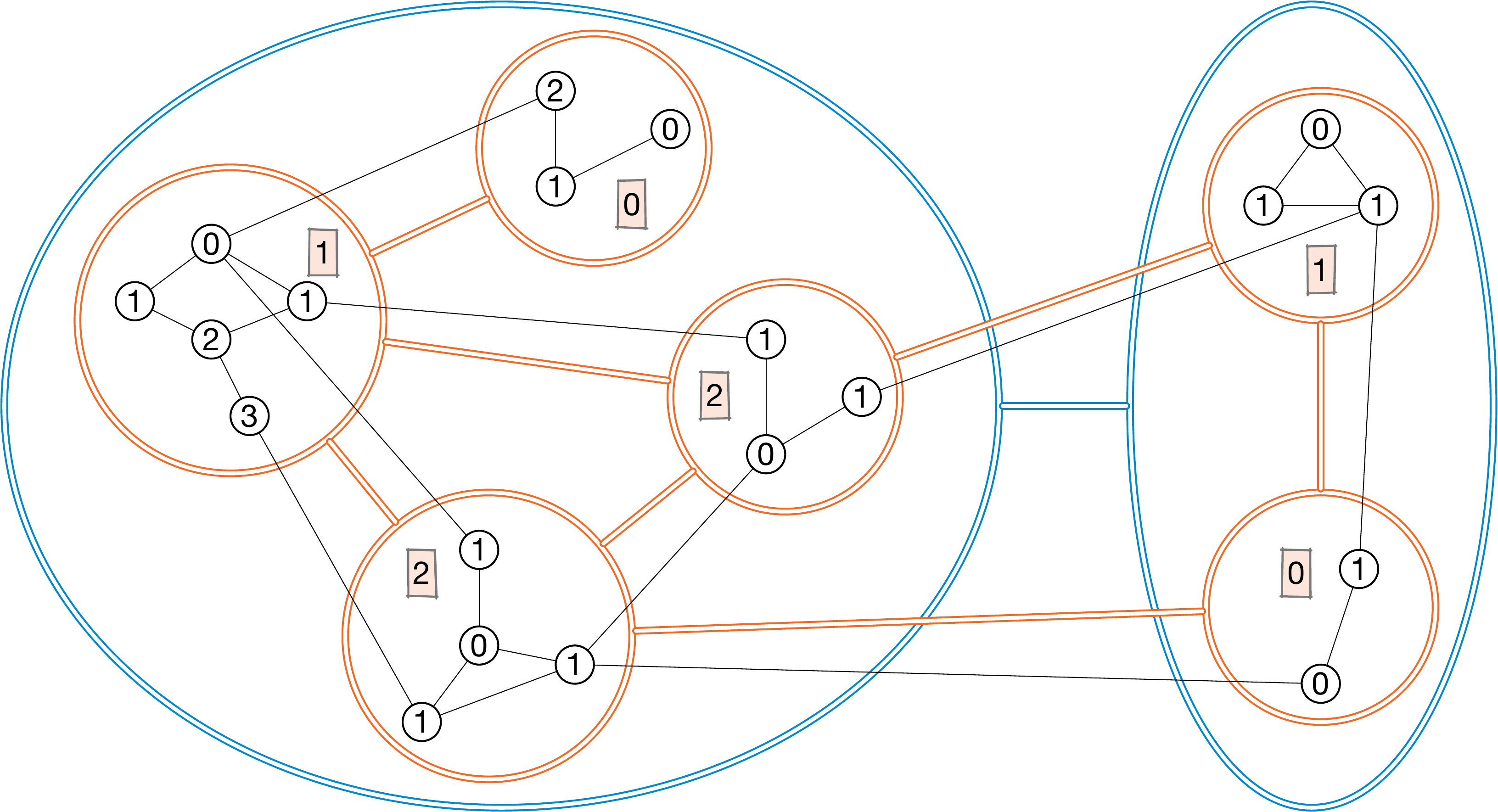}
    \caption{This figure shows an example of input for \Cref{lem:virtvirt-to-virt}. The graph $G$ is depicted in black. The virtual graph $H$ induced by a uniquely-labeled  \bfs-clustering $(\ell,\delta)$ of~$G$ is depicted in orange, where the labels contained in the black nodes denote $\delta$, and $\ell$ is represented by orange circles. The virtual graph $K$ induced by a uniquely-labeled \bfs-clustering $(\ell',\delta')$ of~$H$ is depicted in blue, where the labels contained in the orange squares nodes denote $\delta'$, and $\ell'$ is represented by blue circles.}
  \end{subfigure}
  \begin{subfigure}[b]{0.9\textwidth}
    \includegraphics[width=\textwidth]{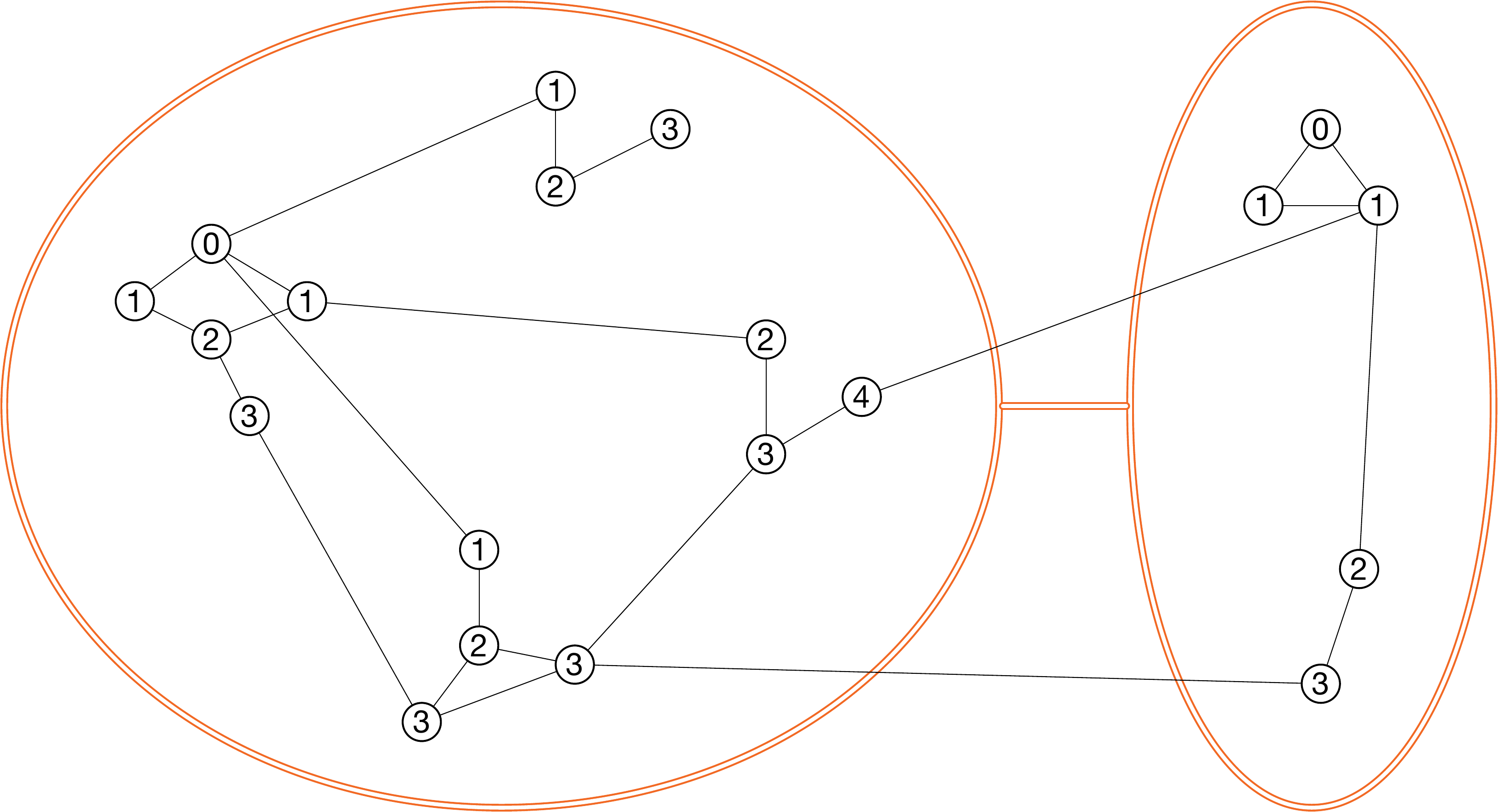}
    \caption{This figure shows the output of \Cref{lem:virtvirt-to-virt} when applied to the graph in (a). We obtain  a  uniquely-labeled  \bfs-clustering $(\ell'',\delta'')$ of~$G$ such that the virtual graph induced by $(\ell'',\delta'')$ is~$K$.}
  \end{subfigure}
  \caption{An example of application of \Cref{lem:virtvirt-to-virt}.}
  \label{fig:cluster-of-clusters}
\end{figure}

\begin{proof}
A visual representation of \Cref{lem:virtvirt-to-virt} is depicted in \Cref{fig:cluster-of-clusters}.
Every node $v\in V$ merely sets
$$\ell''(v)=\ell'(\ell(v)),$$ 
that is, its cluster is simply resulting from merging the clusters of $(\ell,\delta)$ which are given the same label in $(\ell',\delta')$. The issue is computing a distance value $\delta''(v)$ for~$v$. For this purpose, one has first to determine which nodes are roots in the  uniquely-labeled  \bfs-clustering $(\ell'',\delta'')$, i.e., which nodes $v$ satisfy $\delta''(v)=0$. We set 
$$\delta''(v)=0 \iff \delta(v)=0 \;\mbox{and}\; \delta'(\ell(v))=0.$$
That is, a node $v$ with $\ell(v)=i$ is root in $(\ell'',\delta'')$ if and only if it is root of its cluster $G_i$ in $(\ell,\delta)$, and $G_i$ is root in $(\ell',\delta')$. To compute the value $\delta''(v)$ of every node~$v$, the nodes of $G$ collectively perform the following operations whose purpose is, for every node~$v$, to learn the entire structure of the subgraph of $G$ induced by the set of nodes $\{u\in V\mid \ell''(u)=\ell''(v)\}=\{u\in V\mid \ell'(\ell(u))=\ell'(\ell(v))\}$, including the IDs of all its nodes,  which is sufficient for computing $\delta''(v)$ that is merely equal to the distance of $v$ to the root (of cluster $\ell''(v)$) in this subgraph.   

First, for each vertex $i$ of $H$, all the nodes $v\in V$ such that $\ell(v) = i$ recover the structure of the subgraph $G_i$ of $G$ induced by nodes labeled~$i$ by~$\ell$.  This is merely done by performing in parallel, within each cluster~$i$, a convergecast followed by a broadcast scheduled thanks to the function~$\delta$. More precisely, every node $v$ with $\ell(v) = i$ sets a message $M_v$ containing all edges of $E$ between $v$ and the neighbors $u$ of $v$ in $G$ satisfying $\ell(u) = i$. Moreover, among all these nodes~$u$, node $v$ elects as parent $p(v)$ any node satisfying $\delta(u)<\delta(v)$. (If no such node exists, i.e., if $v$ is the root of its cluster, then $v$ sets $p(v)=\bot$.) At this point, we can apply \Cref{lem:broad-converge-cast} for allowing the roots, i.e., all nodes $v$ such that $\delta(v)=0$, to acquire the entire structure of their clusters by a convergecast operation (this structure is merely the union of all messages $M_v$ received by each root). Then all nodes can acquire the entire structure of their clusters by a broadcast of this structure from the roots within each cluster in parallel. The awake complexity of this computation is constant, and the overall computation lasts $O(n)$ rounds.

To then allow every node $v\in V$ with label $\ell''(v)=j$, $j\in V(K)$, to recover the structure of the subgraph $H_j=\bigcup_{\ell'(i)=j}G_i$ of $G$ induced by all nodes $u\in V$ satisfying $\ell''(u)=j$, we are first setting a parent function $p'$ among the clusters $G_i$, $i\in V(H)$, of $(\ell,\delta)$. 
Every node $v$ set a message $M'_v$ containing all its incident edges $\{u,v\}$ such that $\ell''(u)=\ell''(v)$ but $\ell'(u)\neq \ell'(v)$, plus the distance $\delta'(\ell(u))$ for every such edge $\{u,v\}$. By a convergecast operation, the root of every cluster $G_i$ can select a neighboring cluster $G_{i'}$ with $\ell'(i')=\ell'(i')$ and $\delta'(i')<\delta'(i)$, and set $p'(i)=i'$. It also selects an edge between the two clusters $G_i$ and $G_{i'}$. All these information are broadcasted to all the nodes in~$G_i$.  Thanks to \Cref{lem:broad-converge-cast}, the awake complexity of all these operations is constant, and they lasts $O(n)$ rounds.

To allow each node of $H_j=\cup_{\ell'(i)=j}G_i$ to acquire the structure of~$H_j$, it suffices to simulate in $G$ the execution of a convergecast-broadcast protocol $\mathcal{P}$ in the subgraph of $H$ induced by all vertices $i$ such that $\ell'(i)=j$, by applying  \Cref{lem:broad-converge-cast} using the parent relation $p'$ and the distance $\delta'$. For this purpose, every virtual round of $\mathcal{P}$ in $H$ is replaced by a sequence of $O(n)$ rounds for allowing all the nodes of every cluster $G_i$ of $H_j$ to gather all the information exchanged with neighboring clusters during the previous virtual round. Note that, given $n$, $\delta(v)$ and $\delta'(\ell(v))$, every vertex can compute the set of rounds during which is must be awake for performing the several convergecast-broadcast operations occurring during the simulation of protocol~$\mathcal{P}$. 

The total number of rounds required for this simulation is therefore $O(n^2)$ since each of the $O(n)$ virtual rounds of $\mathcal{P}$ in $H$ requires $O(n)$ rounds to be simulated in~$G$. In $\mathcal{P}$, every vertex  $i$ of $H$ is activated $O(1)$ times, that is, each cluster $G_i$ is awake a constant number of virtual rounds. During a virtual round, each node $v\in V$ is awaken during $O(1)$ rounds. Therefore, every node $v$ is awake a constant number of rounds in total, and thus the total awake complexity of the protocol is constant. 
\end{proof}

We now state a result that will be the main ingredient for proving \Cref{thm:nd}. This lemma states that, with awake complexity $O(\log^* n)$,  one can compute  a colored \bfs-clustering $(\gamma,\delta)$ of any $n$-node graph $G = (V,E)$ with desirable properties. Namely, the clusters with small colors, i.e., with colors in a range $\{1,\dots,ab^2\}$ for well suited parameters $a$ and $b$ of our construction, are reduced to a single node. For clusters with large colors, i.e., colors larger than $ab^2$, the colored \bfs-clustering is actually a uniquely-labeled \bfs-clustering. Moreover, the number of clusters with color $>ab^2$ is small, that is, it does not exceed $n/b$ in $n$-node networks. This later property will allow us to proceed by induction in $O(\log_b n)$ phases for establishing \Cref{thm:nd}. For $b=2^{\sqrt{\log n}}$, the number of phases is $O(\log n/\log b)=O(\log n/\sqrt{\log n})= O(\sqrt{\log n})$. 

\begin{lemma}\label{lem:onestep}
There exists an integer~$a>0$ such that, for every integer~$b>0$, there exists an algorithm parametrized by~$b$, with awake complexity $O(\log^* n)$ and round complexity $O(n^4)$ which, given any $n$-node graph $G = (V,E)$, computes a colored \bfs-clustering $(\gamma,\delta)$ such that
\begin{itemize}
\item the pair $(\gamma,\delta)$ restricted to the subgraph induced by $\{v\in V\mid \gamma(v)\in \{1,\dots,a \cdot b^2\}\}$ is a colored \bfs-clustering, and, for every $v\in V$ with $\gamma(v)\in \{1,\dots,a \cdot b^2\}$, $\delta(v)=0$ (i.e., $v$ is alone in its cluster);
\item the pair $(\gamma,\delta)$ restricted to the subgraph induced by $\{v\in V\mid \gamma(v)>a \cdot b^2\}$ is a uniquely-labeled \bfs-clustering with at most $n/b$ clusters.
\end{itemize}
\end{lemma}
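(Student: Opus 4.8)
The plan is to split $V$ into a \emph{sparse part} $V_{\mathrm{small}}$, which becomes a family of singleton clusters properly coloured with few colours, and a \emph{dense part} $V_{\mathrm{big}}$, which is carved into few clusters of small radius. Fix $a$ to be the universal constant such that Linial's algorithm properly colours any graph of maximum degree $d$ with at most $a\cdot d^2$ colours in $O(\log^\star n)$ rounds; since Linial's algorithm keeps all nodes awake throughout, this is also an awake bound. Fix a small radius parameter $\rho$ (a constant, or $O(\log^\star n)$). As preprocessing, every node $v$ stays awake for $\rho$ consecutive rounds and floods, so that it learns the whole subgraph of $G$ induced by its ball $B_G(v,\rho)$, including all identifiers inside it; this costs awake and round complexity $O(\rho)$, and from it every node can later read off its distance to anything within $B_G(v,\rho)$, so BFS distances inside small-radius clusters come for free (no separate BFS construction is needed).

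For the sparse part, call $v$ \emph{sparse} if $|B_G(v,1)|<b$ and put it in $V_{\mathrm{small}}$; then $\deg_{G[V_{\mathrm{small}}]}(v)\le\deg_G(v)<b$, so running Linial on $G[V_{\mathrm{small}}]$ yields a proper colouring with at most $a\cdot b^2$ colours in $O(\log^\star n)$ rounds. Set $\gamma(v)$ to that colour and $\delta(v)=0$ for every $v\in V_{\mathrm{small}}$. Each such colour class is an independent set of $G[V_{\mathrm{small}}]$, hence each of its clusters is a single node with distance value $0$, so the restriction of $(\gamma,\delta)$ to $V_{\mathrm{small}}$ is a valid colored \bfs-clustering meeting the first bullet. (This threshold will have to be refined, as explained below.)

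For the dense part I would run a ball-carving argument. Compute a set $S\subseteq V_{\mathrm{big}}$ of centres that is \emph{well separated} -- any two centres are at $G$-distance more than $2$ -- and \emph{dominating} -- every dense node lies within distance $c\le\rho$ of $S$ -- and assign every node of $V_{\mathrm{big}}$ to the centre of $S$ nearest to it, breaking ties by identifier and keeping only nodes within distance $\rho$ of their centre; each node computes its centre and its value $\delta(v)=\mathrm{dist}(v,\text{centre})$ from its already-learned ball. Giving the cluster $C$ of centre $s$ the unique label $a\cdot b^2+\mathrm{ID}(s)$ makes the restriction of $(\gamma,\delta)$ to $V_{\mathrm{big}}$ a uniquely-labeled \bfs-clustering; each cluster is connected inside $G[V_{\mathrm{big}}]$ because a shortest path from a node to its assigned centre stays inside that node's Voronoi region (and within distance $\rho$ of the centre); and, crucially, the radius-$1$ balls $B_G(s,1)$, $s\in S$, are pairwise disjoint (else two centres would be within distance $2$) and each has at least $b$ nodes (as $s$ is dense), so $|S|\le n/b$ and there are at most $n/b$ clusters. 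Concatenating the two parts (\Cref{lem:concatenate}), the awake complexity is $O(\rho+\log^\star n)=O(\log^\star n)$, and the round complexity is $O(n^4)$, dominated by the computation of $S$.

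The hard part is precisely the computation of $S$: it is essentially a ruling-set / power-graph-MIS computation inside $V_{\mathrm{big}}$, and the two standard routes to it -- building a maximal well-separated set by iterative elimination, or colour-reducing on a power graph $G^{O(\rho)}$ -- cost $\Omega(\log n)$ awake rounds in general, far above the budget $O(\log^\star n)$; so some trick specific to this setting is needed, presumably exploiting that $V_{\mathrm{big}}$ is dense and that we may spend $O(n^4)$ \emph{rounds} (even if not awake rounds), together with a single global Linial step and a carefully truncated selection rule. A second subtlety is that ``dense'' defined by a ball condition in $G$ need not remain dense inside the induced subgraph $G[V_{\mathrm{big}}]$, so that a high-degree node whose neighbours are all sparse would become an isolated singleton cluster and blow up the cluster count; the fix I expect is to let clusters additionally absorb nearby low-degree nodes (so $V_{\mathrm{big}}$ is \emph{defined} as the set of nodes that end up in a cluster, and $V_{\mathrm{small}}$ as the rest, which then still has maximum degree below $b$ -- all the colouring needs). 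Making the centre-selection rule simultaneously guarantee connectivity of clusters inside $G[V_{\mathrm{big}}]$, the bound $|S|\le n/b$, and awake complexity $O(\log^\star n)$ is the technical core of \Cref{lem:onestep}.
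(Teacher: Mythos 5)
Your high-level architecture matches the paper's: low-degree nodes end up as Linial-colored singletons with at most $a\cdot b^2$ colors, and the high-degree clusters are counted by charging to each center its $>b$ private neighbors, using that centers are pairwise at distance $\ge 3$. But the step you explicitly leave open --- computing the well-separated dominating center set $S$ and attaching every dense node to a center within an $O(\log^\star n)$ awake budget --- is precisely the content of the lemma, so as written this is a genuine gap, and your own diagnosis (that the standard ruling-set/power-graph-MIS routes cost $\Omega(\log n)$ awake rounds) is correct. The missing idea in the paper is that no ruling set is needed: compute a proper distance-2 coloring $c_1$ of $G$ with $k=O(n^4)$ colors (Linial on $G^2$, $O(\log^\star n)$ awake rounds), shift the colors of all degree-$\le b$ nodes up by $k$ so that every low-degree color exceeds every high-degree color, and declare a node a root iff it is a local minimum of $c_1$ in $G^2$. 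Separation of roots is then automatic (two local minima of a proper distance-2 coloring cannot be within distance $2$), and domination is automatic with \emph{unbounded} radius: every non-root has a strictly smaller-colored node within distance $2$ to point to, so the pointers form a spanning forest of $G^2$, which is then rerouted through common neighbors into a spanning forest of $G$ whose auxiliary coloring $c_2(v)=2c_1(p_1(v))+b(v)$ still decreases toward the root. The \bfs distances and root IDs are extracted by the scheduled convergecast/broadcast of \Cref{lem:broad-converge-cast}, using $c_2$ as the wake-up schedule, in $O(1)$ awake rounds and $O(n^4)$ total rounds.

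Two further points where your plan would need repair even granted a center set. First, your clusters have radius at most $\rho$ with $\rho$ constant or $O(\log^\star n)$; there is no reason such a small $\rho$ suffices for $S$ to dominate $V_{\mathrm{big}}$, and growing $\rho$ reintroduces the awake-cost and connectivity problems you are trying to avoid. The paper sidesteps this entirely by allowing clusters of arbitrary depth and paying for the distance computation only through the scheduled convergecast/broadcast. Second, your worry about a high-degree node surrounded by sparse neighbors is resolved in the paper not by having clusters ``absorb'' nearby low-degree nodes by hand, but by deciding membership in the singleton side according to the degree of the \emph{root} of a node's tree: the color shift guarantees that a tree rooted at a degree-$\le b$ node contains only degree-$\le b$ nodes, while every degree-$>b$ node lands in a tree whose root has degree $>b$ (possibly itself), and each such root is charged its $>b$ neighbors, none of which can be shared with another root. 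So the $n/b$ bound and the degree bound on the Linial instance both follow from the single shifted distance-2 coloring, which is the trick your proposal is missing.
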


\noindent Let us first show how to prove \Cref{thm:nd} assuming that \Cref{lem:onestep} holds.

\begin{figure}[h]
\centering
    \includegraphics[width=0.7\textwidth]{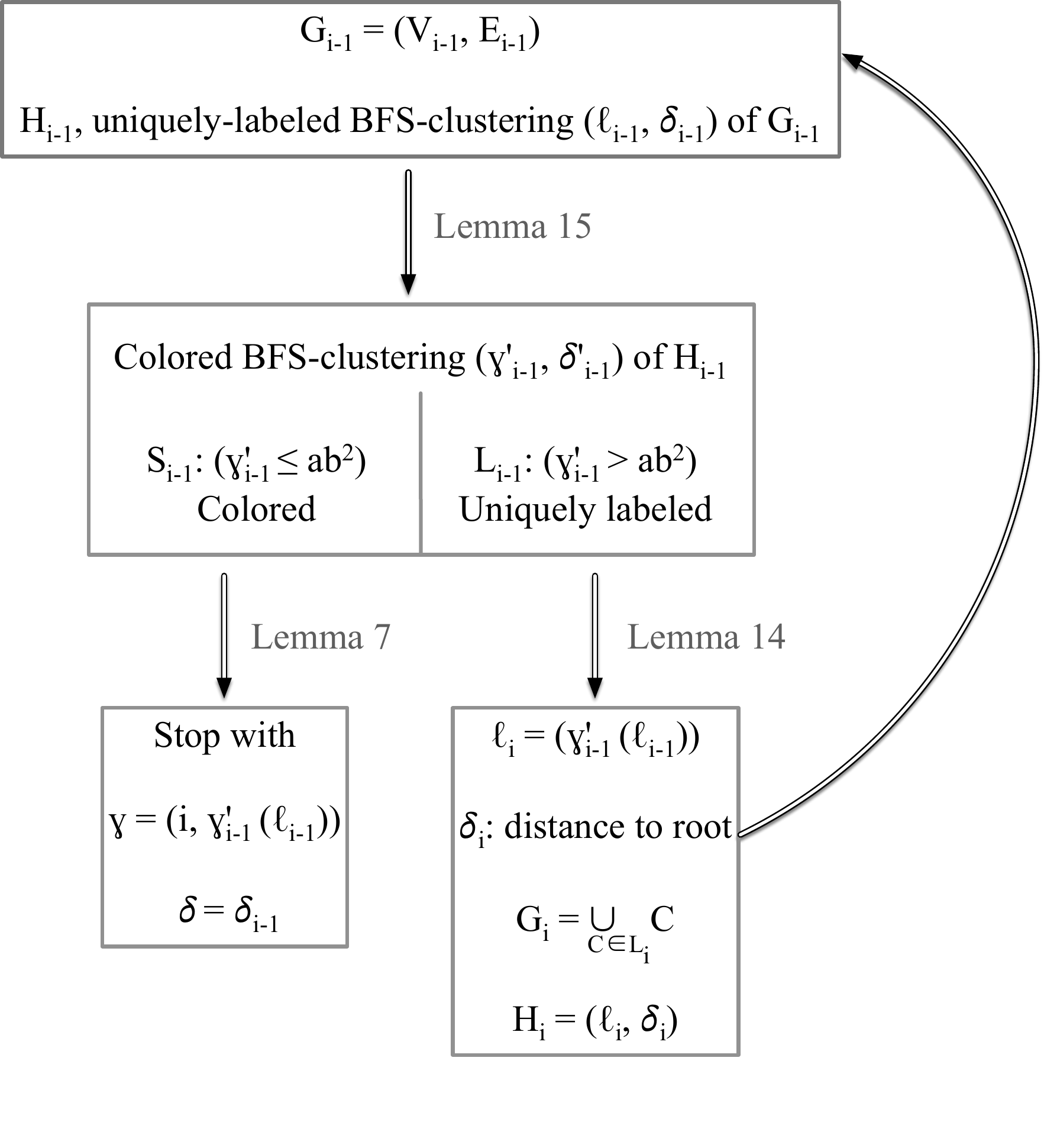}
    \caption{Sum up of the Clustering Algorithm of~\Cref{thm:nd}. We compute iteratively a clustering of the remaining clusters. Some parts get their final cluster at each iteration, while the other goes back into the loop.}
  \label{fig:proof-thm13}
\end{figure}

\begin{proof}[Proof of \Cref{thm:nd}]
Let $k=2 \sqrt{\log n}$, and $b=2^{\sqrt{\log n}}$. Let us consider the trivial  uniquely-labeled  \bfs-clustering $(\ell_0,\delta_0)$ of $G$ obtained by assigning to each node $v$ the label~$\ell_0(v)=\mbox{ID}(v)$, and $\delta_0(v)=0$. Let $H_0$ be the virtual graph associated to this clustering. The desired  colored  \bfs-clustering $(\gamma,\delta)$ is computed by a sequence of $k$ phases, labeled by $i=1,\dots,k$ (see~\Cref{fig:proof-thm13} for an illustration of the iterative process). Initially, let $G_0 = G$. For every $i\geq 1$, Phase~$i$ consists of the following. We assume given a virtual graph $H_{i-1}$ associated to some  uniquely-labeled  \bfs-clustering $(\ell_{i-1},\delta_{i-1})$ of a subgraph $G_{i-1}=(V_{i-1},E_{i-1})$ of $G$ induced by $V_{i-1}\subseteq V$. That is, every $v\in V_{i-1}$ knows the pair $(\ell_{i-1}(v),\delta_{i-1}(v))$. Phase~$i$ consists of the following.   
    
Thanks to \Cref{lem:onestep} applied for $b=2^{\sqrt{\log n}}$, a colored \bfs-clustering $(\gamma'_{i-1},\delta'_{i-1})$ of $H_{i-1}$ can be computed, with awake complexity $O(\log^* n)$, and round complexity~$O(n^4)$ on the virtual graph~$H_{i-1}$. Let 
\[
S_{i-1}=\{C\in V(H_{i-1})\mid\gamma'_{i-1}(C)\in\{1,\dots,ab^2\}\}. 
\]
and 
\[
L_{i-1}=\{C\in V(H_{i-1})\mid\gamma'_{i-1}(C)>ab^2\}, 
\]
The colored \bfs-clustering $(\gamma'_{i-1},\delta'_{i-1})$ of $H_{i-1}$ has the property that, for each vertex $C\in S_{i-1}$, the cluster of $C$ consists of $C$ only.  On the other hand, restricted to the vertices $C\in L_{i-1}$, $(\gamma'_{i-1},\delta'_{i-1})$ is a  uniquely-labeled \bfs-clustering. Our aim is to simulate  on $G_{i-1}$ the computation of $(\gamma'_{i-1},\delta'_{i-1})$ in $H_{i-1}$ so that every node $v\in V_{i-1}$ computes the pair 
\[
(\gamma'_{i-1}(\ell_{i-1}(v)),\delta'_{i-1}(\ell_{i-1}(v))).
\]
This can be achieved by a mere application of  \Cref{lem:simulation-in-virtual-graph}, with awake complexity $O(\log^* n)$, and round complexity~$O(n^5)$. 
At this point, every node $v\in V_{i-1}$ satisfying $\gamma'_{i-1}(\ell_{i-1}(v))\in \{1,\dots,ab^2\}$, sets 
\[
\gamma(v)=(i,\gamma'_{i-1}(\ell_{i-1}(v))), \;\mbox{and} \; \delta(v)=\delta_{i-1}(v),
\]
and terminates. For all the other nodes, i.e., for all the nodes of $G_{i-1}$ whose clusters in $(\gamma'_{i-1},\delta'_{i-1})$ belong to~$L_{i-1}$, one can apply \Cref{lem:virtvirt-to-virt}, stating  that the nodes of $G_{i-1}$ can compute a uniquely-labeled \bfs-clustering $(\ell_i,\delta_i)$ such that, for every $v\in V_{i-1}$, 
\[
\ell_i(v)=\gamma'_{i-1}(\ell_{i-1}(v)),
\]
and $\delta_i(v)$ is its distance to the root of its cluster. The graph $G_i$ is the subgraph of $G$ induced by $V_i=\{v\in V_{i-1}\mid \ell_i(v)  \text{ is defined
and } \ell_i(v)>ab^2\}$.
The graph $H_i$ is the virtual graph of the uniquely-labeled \bfs-clustering $(\ell_i,\delta_i)$ of~$G_i$. Note that, thanks to \Cref{lem:onestep}, $|V(H_{i})| \le |V(H_{i-1})|/b$.

The total awake complexity of the $k$ iterations is $O(\sqrt{\log n} \cdot \log^* n)$, and these $k$ iterations consume $O(n^5 \sqrt{\log n})$ rounds in total. After $k$ iterations of the above sequence of instructions, the virtual graph $H_k$ is empty. Indeed, the number of remaining nodes  after $k$ iterations  can be upper bounded by
\[
\frac{n}{b^{k}} =  \frac{ n }{ 2^{2 \sqrt{\log n}  \cdot \sqrt{\log n}}} = \frac{n}{ 2^{2 \cdot \log n}} =\frac{1}{n}<1.
\]
It follows that, after $k$ iterations, all nodes of $G$ have terminated. The number of colors used by our construction can be upper bounded by 
\[
k \cdot a \cdot b^2 = 2^{O(\sqrt{\log n})},
\]
as desired. It remains to show correctness, i.e., that $(\gamma,\delta)$ is indeed a colored \bfs-clustering of~$G$.  

By construction, nodes assigned to clusters at different iterations have different colors, and therefore two clusters created at different iterations have different colors. Let $C\subseteq V$ be a cluster created at some iteration $i\in\{1,\dots,k\}$. Every node $v\in C$ of this cluster is labeled $\gamma(v)=(i,\gamma'_{i-1}(\ell_{i-1}(v)))$. The cluster $C$ corresponds to vertex $C\in V(H_{i-1})$ with $\gamma'_{i-1}(C)\in\{1,\dots,ab^2\}$, where $H_{i-1}$ is the virtual graph corresponding to the uniquely-labeled \bfs-clustering $(\ell_{i-1},\delta_{i-1})$ of $G_{i-1}$. In the clustering $(\gamma'_{i-1},\delta'_{i-1})$, $C$~is its own cluster. It follows that the setting of $\delta(v)=\delta_{i-1}(v)$ satisfies the property of a \bfs-clustering. 

Finally, let us consider two nodes $v$ and $v'$ of $G$ belonging to clusters $C$ and $C'$ of~$\gamma$, with 
\[
\gamma(v)=(i,\gamma'_{i-1}(\ell_{i-1}(v)))=\gamma(v')=(i,\gamma'_{i-1}(\ell_{i-1}(v'))).
\]
In this case, $C$ and $C'$ were clusters of $G_{i-1}$, and, in the clustering $\gamma'_{i-1}$, $C$ and $C'$ are singletons.  If there is an edge between $v$ and $v'$ in $G$, then since $(\ell_{i-1},\delta_{i-1})$ is a uniquely-labeled \bfs-clustering of~$G_{i-1}$, we have that $C=C'$. It follows that $(\gamma,\delta)$ is a colored  \bfs-clustering, as desired. 
\end{proof}

For completing the proof of \Cref{thm:nd}, it is thus sufficient to prove \Cref{lem:onestep}. That is, we will provide an algorithm that computes a  uniquely-labeled  \bfs-clustering satisfying the requirements of \Cref{lem:onestep}, with awake complexity $O(\log^* n)$, and round complexity~$O(n^4)$.

\begin{proof}[Proof of \Cref{lem:onestep}]

\begin{figure}
\centering
\begin{subfigure}[b]{.48\textwidth}

\begin{tikzpicture}[scale=0.7]
   \tikzstyle{circlenode}=[draw,circle,minimum size=70pt,inner sep=0pt]
    \tikzstyle{whitenode}=[draw,circle,fill=white,minimum size=20pt,inner sep=0pt]
    \tikzstyle{blacknode}=[draw=black,circle=black,fill=black,minimum size=15pt,inner sep=0pt]
    \tikzstyle{nonode}=[draw=white,circle=red,fill=white,minimum size=10pt,inner sep=0pt]

\draw (0,0) node[whitenode] (a1)   {13};
\draw (2,0) node[whitenode] (a2)   {18};
\draw (-2,1.5) node[whitenode] (a3)   {145};
\draw (-2,-1.5) node[whitenode] (a4)   {165};
\draw (2,1.5) node[whitenode] (a5)   {120};
\draw (0,1.5) node[whitenode] (a6)   {136};
\draw (4,1.5) node[whitenode] (a7)   {113};
\draw (2,-1.5) node[whitenode] (a8)   {16};
\draw (0,-1.5) node[whitenode] (a10)   {121};
\draw (4,0) node[whitenode] (a14)   {105};
\draw (6,1.5) node[whitenode] (a15)   {102};
\draw (6,-1.5) node[whitenode] (a16)   {142};
\draw (6,-3) node[whitenode] (a17)   {113};
\draw (4,-1.5) node[whitenode] (a18)   {101};
\draw (8,0) node[whitenode] (a19)   {121};
\draw (8,1.5) node[whitenode] (a20)   {135};
\draw (6,0) node[whitenode] (a21)   {145};
\draw (0,-3) node[whitenode] (a22)   {126};
\draw (-2,-3) node[whitenode] (a23)   {133};
\draw (2,-3) node[whitenode] (a24)   {157};
\draw (4,-3) node[whitenode] (a25)   {128};

\draw (a1) edge node {} (a2);
\draw (a1) edge node {} (a10);
\draw (a1) edge node {} (a3);
\draw (a1) edge node {} (a4);
\draw (a2) edge node {} (a5);
\draw (a2) edge node {} (a8);
\draw (a2) edge node {} (a14);
\draw (a6) edge node {} (a5);
\draw (a7) edge node {} (a5);
\draw (a10) edge node {} (a8);
\draw (a10) edge node {} (a4);
\draw (a15) edge node {} (a14);
\draw (a16) edge node {} (a17);
\draw (a16) edge node {} (a18);
\draw (a16) edge node {} (a19);
\draw (a20) edge node {} (a19);
\draw (a20) edge node {} (a15);
\draw (a21) edge node {} (a19);
\draw (a4) edge node {} (a23);
\draw (a22) edge node {} (a23);
\draw (a24) edge node {} (a25);
\draw (a24) edge node {} (a8);
\draw (a25) edge node {} (a17);
\draw (a25) edge node {} (a8);

\draw (a1) edge [color=red,->, loop above, distance=.5cm] node {} (a1);
\draw (a18) edge [color=red,->, loop above, distance=.5cm] node {} (a18);
\draw (a3) edge [color=red,->, bend right] node {} (a1);
\draw (a4) edge [color=red,->, bend left] node {} (a1);
\draw (a10) edge [color=red,->, bend right] node {} (a1);
\draw (a8) edge [color=red,->] node {} (a1);
\draw (a2) edge [color=red,->, bend left] node {} (a1);
\draw (a5) edge [color=red,->, bend left] node {} (a2);
\draw (a14) edge [color=red,->, bend left] node {} (a2);
\draw (a6) edge [color=red,->] node {} (a2);
\draw (a7) edge [color=red,->] node {} (a2);
\draw (a15) edge [color=red,->] node {} (a2);
\draw (a19) edge [color=red,->] node {} (a18);
\draw (a17) edge [color=red,->] node {} (a8);
\draw (a16) edge [color=red,->, bend left] node {} (a18);
\draw (a21) edge [color=red,->, bend left] node {} (a19);
\draw (a20) edge [color=red,->, bend right] node {} (a15);
\draw (a23) edge [color=red,->] node {} (a1);
\draw (a22) edge [color=red,->, loop above, distance=.5cm] node {} (a22);
\draw (a24) edge [color=red,->, bend left] node {} (a8);
\draw (a25) edge [color=red,->, bend left] node {} (a8);
\end{tikzpicture}
\caption{Parent selection}
\end{subfigure}
\hfill
\begin{subfigure}[b]{.48\textwidth}
\begin{tikzpicture}[scale=0.7]
   \tikzstyle{circlenode}=[draw,circle,minimum size=70pt,inner sep=0pt]
    \tikzstyle{whitenode}=[draw,circle,fill=white,minimum size=20pt,inner sep=0pt]
    \tikzstyle{greynode}=[draw,circle,fill=black!20,minimum size=20pt,inner sep=0pt]
    \tikzstyle{blacknode}=[draw=black,circle=black,fill=black,minimum size=15pt,inner sep=0pt]
    \tikzstyle{nonode}=[draw=white,circle=red,fill=white,minimum size=10pt,inner sep=0pt]

\draw (0,0) node[whitenode] (a1)   {0};
\draw (2,0) node[whitenode] (a2)   {26};
\draw (-2,1.5) node[whitenode] (a3)   {26};
\draw (-2,-1.5) node[whitenode] (a4)   {26};
\draw (2,1.5) node[whitenode] (a5)   {36};
\draw (0,1.5) node[whitenode] (a6)   {37};
\draw (4,1.5) node[whitenode] (a7)   {37};
\draw (2,-1.5) node[whitenode] (a8)   {27};
\draw (0,-1.5) node[whitenode] (a10)   {26};
\draw (4,0) node[whitenode] (a14)   {36};
\draw (6,1.5) node[whitenode] (a15)   {37};
\draw (6,-1.5) node[greynode] (a16)   {202};
\draw (6,-3) node[whitenode] (a17)   {33};
\draw (4,-1.5) node[greynode] (a18)   {0};
\draw (8,0) node[greynode] (a19)   {203};
\draw (8,1.5) node[whitenode] (a20)   {204};
\draw (6,0) node[greynode] (a21)   {242};
\draw (0,-3) node[greynode] (a22)   {0};
\draw (-2,-3) node[whitenode] (a23)   {27};
\draw (2,-3) node[whitenode] (a24)   {32};
\draw (4,-3) node[whitenode] (a25)   {32};

\draw (a1) edge node {} (a2);
\draw (a1) edge node {} (a10);
\draw (a1) edge node {} (a3);
\draw (a1) edge node {} (a4);
\draw (a2) edge node {} (a5);
\draw (a2) edge node {} (a8);
\draw (a2) edge node {} (a14);
\draw (a6) edge node {} (a5);
\draw (a7) edge node {} (a5);
\draw (a10) edge node {} (a8);
\draw (a10) edge node {} (a4);
\draw (a15) edge node {} (a14);
\draw (a16) edge node {} (a17);
\draw (a16) edge node {} (a18);
\draw (a16) edge node {} (a19);
\draw (a20) edge node {} (a19);
\draw (a20) edge node {} (a15);
\draw (a21) edge node {} (a19);
\draw (a4) edge node {} (a23);
\draw (a22) edge node {} (a23);
\draw (a24) edge node {} (a25);
\draw (a24) edge node {} (a8);
\draw (a25) edge node {} (a17);
\draw (a25) edge node {} (a8);

\draw (a1) edge [color=red,->, loop above, distance=.5cm] node {} (a1);
\draw (a18) edge [color=red,->, loop above, dashed, distance=.5cm] node {} (a18);
\draw (a3) edge [color=red,->, bend right] node {} (a1);
\draw (a4) edge [color=red,->, bend left] node {} (a1);
\draw (a10) edge [color=red,->, bend right] node {} (a1);
\draw (a2) edge [color=red,->, bend left] node {} (a1);
\draw (a5) edge [color=red,->, bend left] node {} (a2);
\draw (a14) edge [color=red,->, bend left] node {} (a2);
\draw (a6) edge [color=blue,->, bend left] node {} (a5);
\draw (a17) edge [color=blue,->, bend left] node {} (a25);
\draw (a16) edge [color=red,->, bend left, dashed] node {} (a18);
\draw (a21) edge [color=red,->, bend left, dashed] node {} (a19);
\draw (a20) edge [color=red,->, bend right] node {} (a15);
\draw (a23) edge [color=blue,->, bend left] node {} (a4);
\draw (a22) edge [color=red,->, loop above, dashed, distance=.5cm] node {} (a22);
\draw (a24) edge [color=red,->, bend left] node {} (a8);
\draw (a25) edge [color=red,->, bend left] node {} (a8);

\draw (a8) edge [color=blue,->, bend right] node {} (a2);
\draw (a7) edge [color=blue,->, bend right] node {} (a5);
\draw (a15) edge [color=blue,->, bend right] node {} (a14);
\draw (a19) edge [color=blue,->, bend left, dashed] node {} (a16);
\end{tikzpicture}
\caption{Cluster decomposition}
\end{subfigure}
\caption{In (a), we have a distance-2 coloring $c_1$ of the nodes, with $b=3$ and $k=100$. Notice that nodes of degree $\le3$ have added 100 to their colors. In red, each node $u$ is connected to $p_1(u)$ (with a self loop if $p_1(u)=\bot$).
In (b), each node $u$ has computed its new color $c_2(u)$ (that is not a proper coloring, it is a decreasing coloring from node to parent) and its new parent $p_2(u)$ (in blue if it differs from $p_1(u)$). The trees formed with a root of degree at least 4 form the new clusters. (Grey) Nodes in a tree of root of degree at most 3 become singleton clusters (dotted tree edges to illustrate that the parent relation is forgotten). These nodes will compute some $ab^2$ coloring.}
\label{fig:pointers}
\end{figure}
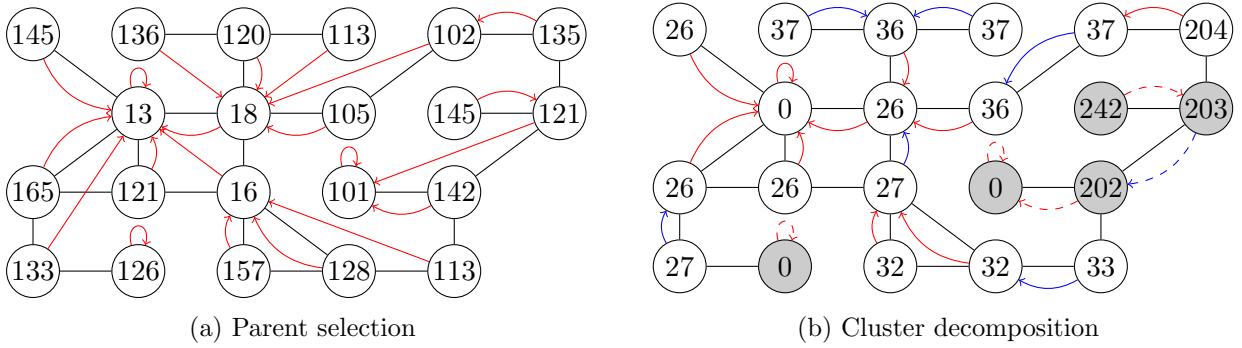

    The algorithm works as follows (see \Cref{fig:pointers}). First, the nodes of the $n$-node graph $G$ collectively compute a proper coloring of $G^2$ with a palette of $k=O(n^4)$ colors, where $G^2$ is the square of~$G$, that is, the graph with the same set of nodes as~$G$, and where two nodes are connected by an edge if they are at distance at most~$2$ in~$G$. This can be done in $O(\log^* n)$ rounds in the \local model by using, e.g.,  Linial's coloring algorithm~\cite{linial92}. Let us denote by $c_0:V\to\{1,\dots,k\}$ this coloring. 
    
    Note that if the IDs assigned to the nodes are
    taken from the range $\{1,\ldots,n^s\}$ for some $s\geq 1$, then this step could be replaced by directly taking the IDs of the nodes as the desired distance-$2$ $k$-coloring, with $k=n^s$, in zero rounds.

    Let $b>0$ be an integer. Every node $v$ with $\deg_G(v)\leq b$ increases its color $c_0(v)$ by adding $k$ to it. That is, the new color $c_1(v)$ of $v$ is set as $c_1(v)=c_0(v)+k$. Every node $v$ with $\deg_G(v)>b$ sets $c_1(v)=c_0(v)$.  Hence, for every $v\in V$, we have $1\leq c_1(v)\leq 2k$. Note that the  colors of the nodes of degree at most $b$ are in the range $\{k+1,\dots,2k\}$, whereas the colors of the nodes of degree larger than~$b$ are in the range $\{1,\dots,k\}$. 

    Let us now set a ``parent pointer'' $p_1(v)$ at every node $v\in V$. Let us denote by $N(v)$ the (open) neighborhood of~$v$, and by $N^2(v)$ the set of nodes at distance exactly 2 from $v$ in~$G$. Three cases are considered, depending on the colors of the nodes in $N(v)\cup N^2(v)$. 

    \begin{itemize}
        
\item If every node $u\in N(v)\cup N^2(v)$ satisfies $c_1(u) > c_1(v)$, then $p_1(v) = \bot$. A ``shift'' variable $b(v)$ is also set to $b(v) = \bot$ in this case.

\item If there exists $u\in N(v)$ satisfying $c_1(u) < c_1(v)$, then $p_1(v) = u_{min}$ where $u_{min}$ is the node with smallest color in $N(v)$, breaking ties arbitrarily.  In this case, the shift at $v$ is set to $b(v) = 0$. 
    
\item If none of the above two conditions are fulfilled, that is, if every $u\in N(v)$ satisfies $c_1(u) > c_1(v)$, but there exists $u\in N^2(v)$ such that $c_1(u)<c_1(v)$, then $p_1(v)$ is set as the node with smallest color in $N^2(v)$, breaking ties arbitrarily. In this case, the shift at $v$ is set to $b(v) = 1$.
 \end{itemize}

\noindent 
By construction, the collection of pointers $p_1$ induces a rooted spanning forest $F_1$ of~$G^2$ (i.e., every tree in the forest is rooted). Namely, the roots are all nodes $v$ with $p_1(v)=\bot$. Moreover, for every $v\in V$,
\[
p_1(v) \neq \bot \Longrightarrow c_1(v) > c_1(p(v)).
\]
The pointers $p_1$, and the coloring $c_1$ are now modified at every node to obtain a new pointer~$p_2$, and a new coloring~$c_2$, as follows:

\begin{itemize}
\item Every root~$v$ of a tree in~$F_1$, i.e., every node $v\in V$ with $p_1(v)=\bot$, sets its color $c_2(v) = 0$. Every other node $v$ takes  color 
$$c_2(v) = 2 \cdot c_1(p_1(v)) + b(v).$$ 
That is, every $v\in V$ with $p_1(v)\neq\bot$ copies the color of its parent, doubles it, and shifts it by~$1$ if its parent is at distance~$2$ from it in~$G$.

\item  The new pointer $p_2(v)$ at node $v\in V$ is defined as follows:
\begin{itemize}
    \item If $p_1(v) = \bot$, then $p_2(v) = \bot$;
    \item If $b(v) = 0$, then $p_2(v) = p_1(v)$;
    \item If $b(v)= 1$, then $p_2(v) = u$ where $u$ is an arbitrary node in $N(v)\cap N(p_1(v))$. 
\end{itemize} 
Note that we have $p_2(v) = \bot$ if and only if $p_1(v) = \bot$.
\end{itemize} 

\begin{claim}\label{claim:spanning-forest}
For every node $v\in V$ with $p_2(v)\neq\bot$, $c_2(v) > c_2(p_2(v))$, and the collection of pointers $p_2$ induces a rooted spanning  forest $F_2$ in~$G$.
\end{claim}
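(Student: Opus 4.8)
The plan is to prove the two assertions of the claim in order: first the strict decrease $c_2(v) > c_2(p_2(v))$ along every pointer, and then deduce the rooted-spanning-forest property as a formal consequence. I will use three facts established before the claim: (i) $c_1$ takes values in $\{1,\dots,2k\}$, so $c_1 \ge 1$ everywhere; (ii) $c_2$ is a nonnegative integer, since it equals $0$ at the roots and $2c_1(p_1(v))+b(v)\ge 2$ otherwise; and (iii) $p_1(v)\neq\bot$ implies $c_1(v) > c_1(p_1(v))$, i.e. $c_1(p_1(v)) \le c_1(v)-1$. For the inequality I would split on the value $b(v)\in\{0,1\}$ (the case $p_1(v)=\bot$ is vacuous since then $p_2(v)=\bot$). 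If $b(v)=0$, then $p_2(v)=p_1(v)$ and $c_2(v)=2c_1(p_1(v))$; if $p_1(p_1(v))=\bot$ then $c_2(p_2(v))=0 < c_2(v)$, and otherwise $c_2(p_2(v))=2c_1(p_1(p_1(v)))+b(p_1(v)) \le 2(c_1(p_1(v))-1)+1 = 2c_1(p_1(v))-1 < c_2(v)$, using (iii) applied to $p_1(v)$ and $b(p_1(v))\le 1$.

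The crux is the case $b(v)=1$. Here, by the parent rule, every $w\in N(v)$ has $c_1(w) > c_1(v)$, while $p_1(v)\in N^2(v)$ satisfies $c_1(p_1(v)) < c_1(v)$, and $p_2(v)$ is a node $u$ chosen in $N(v)\cap N(p_1(v))$ — a set that is nonempty precisely because $p_1(v)$ is at distance exactly two from $v$. Then $c_1(u) > c_1(v) > c_1(p_1(v))$, so $u$ has the neighbour $p_1(v)$ of strictly smaller colour; hence $u$ falls into the second branch of the parent rule, which forces $b(u)=0$ and makes $p_1(u)$ the minimum-colour neighbour of $u$, so $c_1(p_1(u)) \le c_1(p_1(v))$. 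Therefore $c_2(p_2(v)) = c_2(u) = 2c_1(p_1(u)) \le 2c_1(p_1(v)) < 2c_1(p_1(v))+1 = c_2(v)$, as wanted.

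Finally, to conclude that $p_2$ induces a rooted spanning forest $F_2$ of $G$: every non-root $v$ has $p_2(v)\in N(v)$ — in the $b(v)=0$ case $p_2(v)=p_1(v)$ is the minimum-colour neighbour of $v$, and in the $b(v)=1$ case $p_2(v)$ lies in $N(v)$ by construction — so all pointer edges $\{v,p_2(v)\}$ belong to $E$. By the inequality just proved, iterating $p_2$ strictly decreases the nonnegative integer $c_2$, so every $p_2$-path is finite and ends at a node with $p_1=\bot$ (equivalently $p_2=\bot$), and no $p_2$-cycle can exist. Since each node is either a root or has exactly one outgoing pointer, the functional graph of $p_2$ partitions $V$ into trees, each rooted at a $\bot$-node, with all tree edges in $E$; that is exactly a rooted spanning forest of $G$.

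The only non-mechanical step, and the one I expect to be the main obstacle, is the $b(v)=1$ analysis: one has to observe that the auxiliary node $u=p_2(v)$, being adjacent both to $v$ and to the distance-two parent $p_1(v)$, is necessarily pushed into the ``has a smaller-coloured neighbour'' branch of the parent rule, and that the colour of its parent is then bounded by $c_1(p_1(v))$ — this is what makes the ``$+1$'' shift in $c_2(v)$ safely absorb the value $c_2(u)$. Everything else is routine bookkeeping with $c_1\ge 1$ and fact (iii).
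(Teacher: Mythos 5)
Your proof is correct and follows essentially the same route as the paper's: the same case split on $b(v)\in\{0,1\}$, and the same key observation that $u=p_2(v)$ necessarily lands in the second branch of the parent rule because $p_1(v)$ is a smaller-coloured neighbour of $u$, forcing $b(u)=0$. Your write-up is in fact slightly more careful in two spots --- in the $b(v)=0$ case you retain the shift term $b(p_1(v))$ that the paper's displayed identity $c_2(p_1(v))=2c_1(p_1(p_1(v)))$ silently drops (harmless, since $c_1(p_1(v))\ge c_1(p_1(p_1(v)))+1$ absorbs it), and in the $b(v)=1$ case you get by with the inequality $c_1(p_1(u))\le c_1(p_1(v))$ instead of the exact identity $p_1(u)=p_1(v)$ that the paper extracts from the distance-2 property of $c_1$.
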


This claim will be proved later, and we carry on with the description of the algorithm by assuming that it holds.

In each rooted tree in $F_2$, the nodes collectively perform a convergecast, followed by a broadcast, in order to acquire the whole structure of the tree, including the IDs and colors $c_2$ of all its nodes. In particular, every node $v$ can compute its distance $\delta_{aux}(v)$ from the root of its tree, and the ID of this root, that we denote by~$\ell(v)$. By construction, the pair $(\ell,\delta_{aux})$ is a  uniquely-labeled  \bfs-clustering of~$G$. Note that, thanks to  \Cref{lem:broad-converge-cast}, the convergecast-broadcast operation can be done in $O(n^4)$ rounds, with constant awake complexity. 
The colored \bfs-clustering $(\gamma,\delta)$ of \Cref{lem:onestep} is obtained from $(\ell,\delta_{aux})$ as follows. 

Let $C$ be a cluster in  $(\ell,\delta_{aux})$, with a root $r$ satisfying $\deg(r)\leq b$. It must be the case that every node $v\in C\smallsetminus \{r\}$ satisfies $c_2(v)>c_2(r)$. Moreover, since the colors of the nodes with degree~$\leq b$ are larger than the colors of the nodes with degree~$>b$, we get that, for every $v\in C$, $\deg(v)\leq b$. Let $U \subseteq V$ be the subset of nodes in $G$ belonging to a cluster $C$ whose root~$r$ satisfies $\deg(r)\leq b$, and let $G[U]$ denote the subgraph of $G$ induced by the nodes in~$U$. By construction, the graph $G[U]$ has maximum degree at most~$b$. 
The nodes of $G[U]$ collectively compute a proper coloring $\gamma$ of $G[U]$ with $O(b^2)$ colors using Linial's coloring algorithm~\cite{linial92}. 
Let us fix $a$ as the smallest integer such that the number of colors produced by Linial's  algorithm is at most $a\cdot b^2$. The round complexity (and therefore the awake complexity) of this algorithm is  $O(\log^* n)$. At this point, every $v \in U$ becomes part of a cluster $\gamma(v)$ consisting of $v$ alone, and $v$ sets $\delta(v)=0$. 

Every node $v\in V\smallsetminus U$, i.e., every node $v$ belonging to a cluster in $(\ell,\delta_{aux})$ with a root of degree at least $b+1$ sets $\gamma(v)=\ell(v)+a\cdot b^2$, and $\delta(v)=\delta_{aux}(v)$. 

This completes the construction of the desired clustering $(\gamma,\delta)$. The total awake complexity of all the above operations is  $O(\log^* n)$, and its round complexity is~$O(n^4)$. 
It remains to show correctness (including the proof of Claim~\ref{claim:spanning-forest}). 

By construction, the colored \bfs-clustering $(\gamma,\delta)$ satisfies the statement of \Cref{lem:onestep} regarding the nodes in $U$ and the nodes in $V\smallsetminus U$.  
What is left to prove is that, if we restrict the \bfs-clustering $(\gamma,\delta)$ to the subgraph induced by $\{v\in V\mid \gamma(v)>a \cdot b^2\}$, we obtain at most $n/b$ clusters. For this purpose, we upper bound the number of roots.
For each cluster $C$ in $(\ell,\delta_{aux})$, the root $r$ of $C$ satisfies $\deg(r)>b$. Moreover, for each neighbor $v$ of~$r$, $v$~cannot be neighbor of a root $r'$ of another cluster $C'$ in $(\ell,\delta_{aux})$ with $\deg(r')>b$. This is because the roots in $(\ell,\delta_{aux})$ are local minima in~$G^2$, and thus they are at mutual distance at least $3$ in~$G$.
It follows that, for each root $r$ in $(\ell,\delta_{aux})$ with $\deg(r)>b$, we can charge all $r$'s neighbors to~$r$. These neighbors are at least $b+1$, and they are not roots. Note that each node $v\in V$ is charged to at most one root. 
We thus get that the number of roots $r$ with $\deg(r)>b$ are at most $n / b$, as desired. This completes the proof of \Cref{lem:onestep}, assuming Claim~\ref{claim:spanning-forest} holds. 
\end{proof}

It just remains to prove Claim~\ref{claim:spanning-forest}.

\begin{proof}[Proof of Claim~\ref{claim:spanning-forest}]
We first show that, for every node $v\in V$ with $p_2(v)\neq \bot$, $c_2(v) > c_2(p_2(v))$. For each such node $v$, we have $b(v)\in\{0,1\}$. We analyse the two cases separately.
\begin{itemize}
    \item Let $v\in V$ with $b(v) = 0$, i.e., $p_2(v) = p_1(v)$. In this case, node $p_1(v)$ may either be a root or not. If $p_1(p_1(v)) = \bot$, then 
    \[
    c_2(p_2(v)) = c_2(p_1(v)) = 0, \;\mbox{and}\; c_2(v) = 2\cdot c_1(p_1(v)) > 0,
    \]
    from which it follows that $c_2(p_2(v)) < c_2(v)$.
    If $p_1(p_1(v)) \neq \bot$, say $p_1(p_1(v)) = u$, then $c_1(p_1(v)) > c_1(u)$. Moreover,  
    \[
    c_2(p_2(v)) = c_2(p_1(v)) = 2\cdot c_1(u), \; \mbox{and}\; c_2(v) = 2 \cdot c_1(p_1(v)),
    \]
    from which it again follows that $c_2(p_2(v)) < c_2(v)$.

    \item Let $v\in V$ with $b(v) = 1$. Let $p_2(v) = u\in N(v)\cap N(p_1(v))$. By the definition of $b(v)$, $c_1(u) > c_1(v)$, and $b(u) = 0$. Since $u$ is a neighbor of~$p_1(v)$,  $p_1(v) \in N(u)$,  and since $p_1(v) \in N^2(v)$, we get that $p_1(v)$ has the smallest color among all nodes in~$N(u)$. Moreover, since $c_1$ is a distance-2 coloring, there are no nodes $z \in N(u)\smallsetminus \{p_1(v)\}$ that have the same color as $p_1(v)$.  By combining these facts with the fact that $b(u) = 0$, we get that $p_1(u) = p_1(v)$. As a consequence, $$c_2(v) = 2 \cdot c_1(p_1(v)) + 1 > 2 \cdot c_1(p_1(v))= 2 \cdot c_1(p_1(u)) = c_2(u) = c_2(p_2(v)).$$
\end{itemize}
So the first statement of Claim~\ref{claim:spanning-forest} holds in the two cases $b(v) = 0$ or~$1$. 

Finally, to show that the collection of pointers $p_2$ induces a spanning rooted forest $F_2$ in~$G$, it is sufficient to note that each node $v\in V$ with $p_2(v) \neq  \bot$ satisfies $c_2(v) > c_2(p_2(v))$. It follows that each such node $v$ is part of a rooted tree, and the union of all these trees spans~$G$. This completes the proof of Claim~\ref{claim:spanning-forest}, and thus the proof of \Cref{lem:onestep}.
\end{proof}

\section{Conclusion and Open Problems}

We have proved that the awake complexity of computing a colored \bfs-clustering with $2^{O(\sqrt{\log n})}$ colors is $O(\sqrt{\log n} \log^* n)$ rounds. Moreover, we can exploit such a clustering to solve any problem in the \olocal class with $O(\sqrt{\log n})$ awake complexity. We do not know whether the $\log^* n$ term in the awake complexity of computing this clustering is necessary.

\begin{oq}
Is it possible to compute a colored \bfs-clustering with $2^{O(\sqrt{\log n})}$ colors with awake complexity $O(\sqrt{\log n})$?
\end{oq}

More importantly, it would be interesting to figure out whether the $\sqrt{\log n}$ bound can be reduced for all problems in the \olocal class.

\begin{oq}
    Is it possible to solve all problems in \olocal  with $o(\sqrt{\log n})$ deterministic awake complexity?
\end{oq}

Additionally, while we focused mostly on the awake complexity of our algorithms, their round complexities are polynomial. It would thus be interesting to figure out whether similar results as the one in the paper regarding the awake complexity can ba achieved while improving the round complexity to, e.g., polylogarithmic. Similarly, it would be interesting to decrease the \emph{average} awake complexity of the nodes.

\begin{oq}
    Is it possible to solve all problems in \olocal  with $o(\log n)$ deterministic awake complexity and polylogarithmic round complexity? Is it possible to do so with  $o(\sqrt{\log n})$, or even constant \emph{average} awake complexity?
\end{oq}

Moreover, understanding how much randomization can help is also an interesting open question.

\begin{oq}
    Can we exploit randomization for solving problems in \olocal with $o(\sqrt{\log n})$ awake complexity?
\end{oq}

Finally, understanding whether our techniques can be applied to solve \emph{edge} problems such as maximal matching and $(2\Delta-1)$-edge coloring remains an interesting open question.
\begin{oq}
    Can we extend \olocal to include edge problems and solve them with deterministic sublogarithmic awake complexity?
\end{oq}

\section*{Acknowledgements}
We thank William K.\ Moses Jr.\ for pointing out that maximal matching does not belong to the class \olocal.

\urlstyle{same}
\bibliographystyle{alpha}
\bibliography{ref}

\end{document}